\newcommand{\R}[1]{\mathfrak{#1}}
\newcommand{\CP}[2]{\mathbb{P} \mathopen{} \left[ #1 \, \middle| \, #2 \right] \mathclose{} }
\newcommand{\E}[1]{\mathbb{E} \mathopen{} \left[ #1 \right] \mathclose{} }
\newcommand{\CE}[2]{\mathbb{E} \mathopen{} \left[ #1 \, \middle| \, #2 \right] \mathclose{} }
\newcommand{\tr}[1]{\tilde{#1}}
\newcommand{\e}[1]{\bar{#1}}
\newcommand{\conv}[1]{ \operatorname{conv} \mathopen{} \left( #1 \right) \mathclose{} }
\newcommand{\MW}{\mathsf{MW}}
\newcommand{\PNC}{\mathsf{PNC}}
\newcommand{\fPNC}{\mathsf{fPNC}}
\newcommand{\Rminus}{R^-}
\newcommand{\D}{\Delta}
\newcommand{\figref}[1]{\figurename~\ref{#1}}
\newcommand{\norm}[1]{\left\lVert#1\right\rVert}
\DeclareMathOperator*{\argmin}{\arg\min}
\DeclareFontFamily{U}{mathb}{}
\DeclareFontShape{U}{mathb}{m}{n}{
  <-5.5> mathb5
  <5.5-6.5> mathb6
  <6.5-7.5> mathb7
  <7.5-8.5> mathb8
  <8.5-9.5> mathb9
  <9.5-11.5> mathb10
  <11.5-> mathbb12
}{}
\theoremstyle{plain}
\newtheorem{theorem}{Theorem}
\newtheorem{lemma}[theorem]{Lemma}
\begin{document}

\title{Model-Predictive Control for Discrete-Time Queueing Networks with Varying Topology}

\author{Richard~Schoeffauer\textsuperscript{1}
        and~Gerhard~Wunder\textsuperscript{2}
\thanks{\textsuperscript{1} richard.schoeffauer@fu-berlin.de, \textsuperscript{2} gerhard.wunder@fu-berlin.de}
\thanks{Both authors are members of the Heisenberg Communication and Information Theory Group at Freie Universität Berlin}
}

\maketitle
\begin{abstract}
In this paper, we equip the conventional discrete-time queueing network with a Markovian input process, that, in addition to the usual short-term stochastics, governs the mid- to long-term behavior of the links between the network nodes. This is reminiscent of so-called \textit{Jump-Markov systems} in control theory
and allows the network topology to change over time.
We argue that the common back-pressure control policy is inadequate to control such network dynamics and propose a novel control policy inspired by the paradigms of \textit{model-predictive control}.
Specifically, by defining a suitable but arbitrary prediction horizon, our policy takes into account the future network states and possible control actions.
This stands in clear contrast to most other policies which are myopic, i.e. only consider the next state.
We show numerically that such an approach can significantly improve the control performance and introduce several variants, thereby trading off performance versus computational complexity.
In addition, we prove so-called \textit{throughput optimality} of our policy which guarantees stability for all network flows that can be maintained by the network.
Interestingly, in contrast to general stability proofs in model-predictive control, our proof does not require the assumption of a terminal set (i.e. for the prediction horizon to be large enough). Finally, we provide several illustrating examples, one of which being a network of synchronized queues. This one in particular constitutes an interesting system class, in which our policy exerts superiority over general back-pressure policies, that even lose their throughput optimality in those networks.
\end{abstract}

\begin{IEEEkeywords}
Predictive Network Control, Model Predictive Control, Jump-Markov Systems, Throughput Optimality
\end{IEEEkeywords}

\IEEEpeerreviewmaketitle

\section{Introduction and related Research}

Discrete-time queueing networks are used to model a variety of scenarios, ranging from traffic control over parallel computing to wireless communication. They are closely related to the canonical control system
\begin{equation}
	\label{eq::usual_control}
	x_{t+1} = A x_t + B_t v_t + D_t w_t
\end{equation}
with some significant differences: i) The controls $v_t$ are binary in nature and linearly constrained by $C v_t \leq c$, e.g. due to the interference properties of wireless channels. ii) The state lives on the discrete set $x_t \in \mathbb{N}^n$ where it exhibits no inertia ($A = I$). iii) And crucially, the matrices $B_t$ and $D_t$ behave stochastically, implying that the effect of a control decision is not certain. Together with the class of back-pressure control policies, those systems form a well investigated subclass of control problems.

The prototype back-pressure policy, that we will call the max-weight policy ($\MW$), was first introduced in \cite{Tassiulas1992}, where the authors also proved its much praised property of being \textit{throughput optimal}. This means, that $\MW$ can manage any load of traffic, provided this load can somehow be supported by the network topology.
Over time, many variations of $\MW$ where developed, e.g. to allow for a generalized control objective \cite{Meyn2009} \cite{Kasparick2018}, or to increase its performance in special cases like networks with input-queued switches or time-varying channels \cite{Mckeown1999} \cite{Neely2005}. Specific shortcomings of $\MW$, like e.g. high end-to-end delay, where investigated in \cite{Khan2009} \cite{Subramanian2007} \cite{Ying2011} and later partially remedied by \cite{Neely2005} \cite{Ying2011} \cite{Huang2013} \cite{Xiong2011}, using e.g. shortest path algorithms to reduce delay especially in low traffic scenarios.

In this paper, we propose a novel control policy that is predictive in nature and that we will call predictive network control ($\PNC$). It can be regarded as a generalization of $\MW$, since it contains $\MW$ as a special case. But while $\MW$ and all its derivations are \textit{myopic}, i.e. only aim to improve the system state for the \textit{immediate} next time slot, $\PNC$ aims to improve the system state for \textit{multiple} time slots up until a prediction horizon. This leads to the calculation of an entire optimal \textit{trajectory} of control vectors. However, instead of applying the entire trajectory for the next few time slots, only the first control vector is applied to the system and the process repeats in the consecutive time slot. This allows the controller to react to any unforeseen changes in the control system \cite{Mayne2000}.

Such a control scheme is called model-predictive control (MPC), and therefore $\PNC$ is a realization of MPC, tailored specifically towards queueing networks. MPC itself is a well established branch of control theory and can cope very easily with hard constraints and non-linearities, making it particularly suited for our control problem. However, its advantages are payed for by high requirements on computational resources.
So far, there has only been one attempt to bring MPC to queueing networks: In \cite{VanLeeuwaarden2010a} the authors focus on a special case of the standard model, in which only the arrivals to the system are of stochastic nature. The investigation is limited to numerical simulations, which show better system performance (smoother time behavior) for a designed MPC controller compared to simple feedback control laws.
Since our queueing network will include a much higher degree of stochastics, we will not follow up on their work.

Because a queueing network misses any inertia ($A = I$), a predictive control scheme can only tap into its full potential, if the stochastics for $B_t$ (or $D_t$) are complex enough. E.g. if both matrices behave according to white-noise, prediction over more than the next time slot yields close to no improvement over myopic strategies. Hence, the benefit of a predictive control scheme usually increases with complexity of the system model. Therefore, we let $B_t$ (the matrix which is responsible for the topology and the quality of the links between the nodes of the network) be governed by a discrete-time Markov chain (DTMC) and a Bernoulli trial. This gives the opportunity to model long-term and short-term effects, respectively. Take e.g. wireless relay networks with user mobility: here, short-term interference leading to packet loss can be modeled by the Bernoulli trial, while long-term change in channel quality due to the mobility can be expressed by the DTMC \cite{Guzman2019}.

Control systems, in which the model parameters change according to a DTMC are called Jump-Markov systems (JMS). (Since simple feedback controllers cannot detect this change, JMS are usually controlled with MPC controllers). There exist several control approaches towards JMS, covering cases with linear \cite{Park2002} \cite{Chitraganti2014} \cite{Tonne2017a} and even nonlinear system dynamics \cite{Liu2015} \cite{Tonne2017}, where the referenced works mainly differ in the choice of considered constraints.
However, all these works deal with conventional control systems, where the controller usually tries to follow a reference trajectory and noise ($w_t$) represents a stochastic disturbance with zero first moment. In contrast, from the perspective of queueing networks, the noise term represents the arrival of packets/customers whose first moment is strictly positive, and the controller tries to maintain finite queues for any given arrival (hence, there is no need for a reference trajectory). For that reason, prior work on JMS is only partially applicable to our systems. To the best of our knowledge, we are the first to consider both JMS and MPC in the context of discrete-time queueing networks.

Our \textbf{contribution} is three-folded: i) We develop a JMS-adapted discrete-time queueing network and introduce a family of predictive control policies, based on the paradigms of MPC. ii) We proof throughput optimality (the equivalent of stability) for the most simple of our predictive control policies, thereby implying the same for the rest. And iii) we show the benefit of these policies over the conventional back-pressure control ($\MW$), using numerical simulation. In particular, our policies seem to maintain their throughput optimality in networks with synchronized queues, making them unique.

\section{System Model \& Prerequisites}

\subsection{System Model}

We begin by stating the constituting equation for our system model and clarify its components afterwards. Similar to the conventional control system, a discrete-time queueing network can be expressed by its one-step evolution and associated constraints
\def\hspa{\hspace{0.5mm}}
\begin{gather}
    \label{eq::basic_queueing}
    q_{t+1} = q_t + R M_t v_t + a_t
    \\[1ex]
    \nonumber
    \text{subject to}
    \\[1ex]
    \nonumber
    \left(
	\begin{aligned}
	    C v_t & \leq c \\
	    - \Rminus v_t & \leq q_t
    \end{aligned}
    \right)
    \hspa \text{and} \hspa
    \left(
    \begin{gathered}
	    M_t \sim \mathcal{B}(W^{s_t})
	    \\
	    W^{s_t} \in \{W^1 ,\dots W^{n_s} \}
	    \\
	    (s_t) \sim \operatorname{DTMC}(\{1,\dots n_s\},P,s_0)
    \end{gathered}
    \right)
\end{gather}
The \textit{queue vector} $q_t = \left( q_t^1 \dots q^{n_q}_t \right)^\intercal \in \mathcal{Q} = \mathbb{N}^{n_q}$ represents the system- (or queue-) state, where $q_t^i$ counts the number of packets, waiting in queue $i = 1,\dots n_q$ in time slot $t$. Each queue itself is a node of the network.

In any time slot, packets can be transmitted from one queue to another if there exists a directed link between the two and the link is activated.
There are $n_v$ links, each of which can be represented by a vector $r^j \in \left\{ -1,0,+1 \right\}^{n_q}$ ($j = 1,\dots n_v$), that, by superpositioning with $q_t$, transfers a packet from one queue ($\{-1\}$) to another ($\{+1\}$). All links are collected as columns in the routing matrix $R \in \left\{ -1,0,+1 \right\}^{n_q \times n_v}$ which therefore holds the topology.

[Remark:
In \textit{conventional} networks, a link has exactly \textit{one} $\{-1\}$ entry (origin) and \textit{at most one} $\{+1\}$ entry (destination). This implicit constraint is a prerequisite for all back-pressure policies to develop their throughput optimality. Though we will also use this constraint throughout the paper, our novel control policies seem to maintain their throughput optimality even when it is violated (see section \ref{subsec::synchronized_queues}), allowing us to control networks with synchronized queues.]

The controller may activate a link in a given time slot via the binary control vector $v_t \in \{ 0 , 1 \}^{n_v}$.
If we could activate all links simultaneously ($v_t = \mathbf{1}_{n_v}$), the control problem would become trivial. However, we are usually constrained in the activation (e.g. due to interference properties) by the \textit{constituency constraint} $Cv_t \leq c$. The dimensions of $C$ and $c$ are case dependent, their entries are from the set $\mathbb{N}$. Furthermore, a packet can only be scheduled for transmission, if it is present at the corresponding queue, hence a packet may only traverse a single link per time slot. We will refer to this as the \textit{positiveness constraint}, which is readily implemented by considering the maximum one-step efflux of the system, which is $\Rminus v_t$, where $\Rminus$ is equal to $R$ without its positive entries. Naturally, the maximum efflux cannot drain more packets than are actually present: $q_t + \Rminus v_t \geq 0$. Note that this also guarantees that $q_t \in \mathbb{N}^{n_q}$.

For clarification, we refer to \figref{fig::min_example}. Here, we stated topology and constituency matrices and derived the corresponding constraints. 
Given only $C$ and $c$, both components of $v_t$ could be active simultaneously. However, if $q^2$ is empty ($q^2 = 0$), it is not possible to activate the second link $r^2$.

\def\comR{
	$
	R = \begin{pmatrix}
	-1 & 0 \\ +1 & -1
	\end{pmatrix}
	$
}
\def\comC{
	$
	C = \begin{pmatrix}
	0 & 0
	\end{pmatrix}
	$
}
\def\comc{
	$ c = 1 $
}
\def\comRC{
	$ \begin{pmatrix}
	1 & 0 \\ 0 & 1 
	\end{pmatrix} v_t \leq q_t$
}
\def\comCC{
	$ \begin{pmatrix}
		0 & 0 
	\end{pmatrix} v_t \leq 1$
}
\def\comA{
	$\Downarrow$
}

\begin{figure}[htbp]
	\centering
	\includegraphics[]{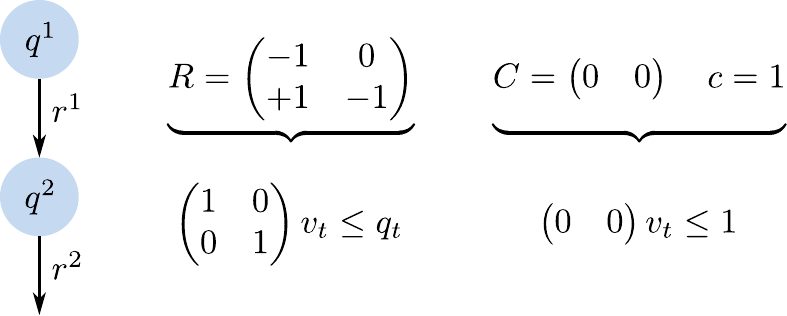}
	\caption{Minimal Example of a Queueing Network}
	\label{fig::min_example}
\end{figure}

Still, even an activated link $r^j$ might fail in its transmission, leaving source and destination queue unchanged. This is modeled by a stochastic variable $m^j_t \in \{0,1\}$ which is Bernoulli distributed (coin-flip) with probability $\e{m}^j_t \in [0,1]$. I.e. $m^j_t \sim \mathcal{B}(\e{m}^j_t)$. For a succinct notation, we collect all those quantities in the diagonal matrices $M_t = \operatorname{diag}_{j = 1 ,\dots n_v}\{m^j_t\}$ and $\e{M}_t = \operatorname{diag}_{j = 1 ,\dots n_v}\{\e{m}^j_t\}$ respectively such that $M_t \sim \mathcal{B}(\e{M}_t)$ and of course $\E{M_t} = \e{M}_t$. 

The Bernoulli trials on $\e{M}_t$ are intended to model short-term stochastics. For long-term stochastics, we let $\e{M}_t$ be picked from a predetermined set $\mathcal{W} = \{ W^1,W^2,\dots \}$ of weight matrices $W^i$ according to a DTMC $(s_t)$. If $\mathcal{S} = \{1 ,\dots n_s\}$ is the index set of $\mathcal{W}$, we have $(s_t) \sim \operatorname{DTMC}(\mathcal{S},P,s_0)$ where $P$ and $s_0$ are transition matrix and initial state, respectively. The entire selection process can therefore be expressed as $\e{M}_t = W^{s_t}$. If $\sigma_t$ describes a distribution for the DTMC, we have $\lim_{t\to\infty} \sigma_t = \pi$, which we assume to be the only stable distribution, with $\pi^{\R{s}}$ being the average probability of $s_t = {\R{s}}$.

The task for a controller is to steer the packets through the network to their destination nodes. Once reached, the packets leave the system, which can be modeled via links $r^j$ without the $\{+1\}$ entry. At the same time, new packets are created directly at the queues through an \textit{arrival vector} $a_t \in \mathbb{N}^{n_q}$ of possibly stochastic nature. We call $\e{a} = \E{a_t}$ the \textit{arrival rate} and make the usual assumption, that there is an upper bound, such that always $a_t \leq \hat{a}$.

Finally, we remark that if different packets are destined for different final destination nodes, they are of different \textit{class} (or belong to a different \textit{flow}). Each class has to have its separate network of queues in order for the packets to be distinguishable. Thus, for each class a new copy of the system would have to be employed. While many authors model this by adding an additional dimension (the dimension of all classes) to all quantities, we will just assume, that the so far described system model already consists of those copies, stacked in a suitable way, thereby avoiding the introduction of another dimension to the system model.

\subsection{Control Policies and Throughput Optimality}

We already defined $q_t \in \mathcal{Q}$ and $s_t \in \mathcal{S}$. Note that $(s_t)$ is a DTMC and $q_{t-1}$ is not needed for a prediction of future states once $q_t$ is known. If we further assume the arrival vector to be independent of past realizations, there is no reason for a controller to use any but the last known realizations of $q_t$ and $s_t$ for its decision making.
If we also define the set of all control vectors by
\begin{equation}
	\mathcal{V} = \Set{ v \in \{0,1\}^{n_v} | 
		\begin{aligned}
	    C v & \leq c \\
    	\end{aligned}
    }
\end{equation}
then we can express a control policy $\phi$ as a mapping from the set of relevant, observed quantities onto the set of control vectors: $\phi: \mathcal{Q} \times \mathcal{S} \to \mathcal{V}$. In some cases, however, it makes sense to incorporate a stochastic process into the policy itself, in order to circumvent the discreteness of $\mathcal{V}$. This way, given a fixed pair of observations $(q',s')$, we can not only access a fixed $v' = \phi(q',s') \in \mathcal{V}$, but on average \textit{any} predetermined element $\sum_{v \in \mathcal{V}} \lambda^v v$ of the convex hull $\conv{\mathcal{V}}$. Hence we define a \textbf{control policy} as
\begin{equation}
	\phi: \mathcal{Q} \times \mathcal{S} \times \Omega \to \mathcal{V}
\end{equation}
where $\Omega$ is the sample space of the underlying stochastic process. (Remember that a control policy is only valid, if it complies with $\Rminus \phi(q_t,s_t,\omega_t) \leq q_t$.)

We say that a control policy $\phi$ \textbf{stabilizes} a system for a given arrival rate $\e{a}$ if it can compensate the arrival rate on average:
\begin{equation}
\label{eq::def_stability}
\begin{aligned}
	\mathbf{0} &= \lim_{\tau \to \infty} \frac{1}{\tau} \sum_{t=1}^\tau \left( \vphantom{\frac{1}{2}} a_t + R M_t \phi(q_t,s_t,\omega_t) \right)
	\\
	&= \hspace{10mm} \e{a} \hspace{10.7mm} + \sum_{\R{s} \in \mathcal{S}} \pi^\R{s} RW^\R{s} \phi(q_t,\R{s},\omega_t)
\end{aligned}
\end{equation}
Comparing with the system equation \eqref{eq::basic_queueing}, this implies that the average queue state remains bounded.

Finally, a control policy $\phi$ is throughput optimal, if it stabilizes a given system for any arrival rate $\e{a}$ for which at least one other (possibly unknown) policy stabilizes the system. This can readily be expressed by noting that a policy $v_t = \phi(s_t,\omega_t)$ can on average, for every state of $\mathcal{S}$ separately, excess any predetermined element in the interior of the convex hull $\conv{\mathcal{V}}$. Note that this excludes the boundary of $\conv{\mathcal{V}}$, because due to the positiveness constraints, no policy can guarantee to never be forced to be idle (meaning $v_t = \mathbf{0}$). Naturally, there are no other options for the average control vector than those in $\conv{\mathcal{V}}$.
Thus, $\phi$ is \textbf{throughput optimal}, if it stabilizes the system for all $\e{a}$ with
\begin{equation}
	\label{eq::def_to}
	\e{a} + \sum_{\R{s} \in \mathcal{S}} \pi^\R{s} RW^\R{s} \sum_{v \in \mathcal{V}} \lambda^{\R{s},v}  v = -\varepsilon \mathbf{1}_{n_q}
	\hspace{10mm}
	\begin{aligned}
	\lambda^{\R{s},v} & \geq 0 \\
	\sum_v \lambda^{\R{s},v} & \leq 1
	\end{aligned}
\end{equation}
where $\varepsilon > 0$ is used to exclude the boundary of $\conv{\mathcal{V}}$.

\section{Predictive Network Control ($\PNC$)}

Inspired by the common MPC paradigms, our novel control policy, $\PNC$, works in 3 steps: i) An entire trajectory of optimal control decisions from $t$ up until $t+H-1$ is calculated as the result of a minimization of an objective function $J$. Here, $J$ is a function of the next $H$ future system states, which we can only \textit{predict}. $H$ is called the prediction horizon. ii) Only the first (i.e. the immediate) control decision in this trajectory is actually applied to the system. iii) The process repeats (discarding the rest of the just calculated trajectory). W.l.o.g., for the rest of the paper, we assume the current time slot to be $t=0$.

The most often encountered objective is the sum of squares, which in our case translates to
\begin{equation}
	\label{eq::pnc_objective}
	J(q_0,s_0) = \CE{ \sum_{t=1}^{H} q_t^Tq_t }{q_0,s_0}
\end{equation}
Using this definition, minimizing $J$ means minimizing the amount of packets in the network, which can only be done by delivering the packets to their destinations.
Over the system evolution \eqref{eq::basic_queueing}, $J$ will be influenced by the choice of control vectors via
\begin{equation}
	\label{eq::actual_expected_evolution}
    \CE{q_{T}}{q_0,\sigma_0}
    = q_0 + \sum_{t = 0}^{T-1} \sum_{\R{s} \in \mathcal{S}} \left( \sigma_0 P^{(t)} \right)^\R{s} R W^\R{s}
    v_t
    + T\e{a}
\end{equation}
where $\sigma_0$ is the distribution corresponding to the initial state $s_0$ and $\left( \sigma_0 P^{(t)} \right)^\R{s}$ stands for the $\R{s}$-th entry of the predicted distribution in time slot $t$. 

Note, that the prediction can be implemented in three different ways, varying in precision and required effort:

i) The first one is the \textit{true prediction}, which assigns a control vector to \textit{every} time slot (up until $H$) and \textit{every} possible set of realizations of the quantities in the system evolution. Since the ensemble of these realizations in time slot $t$ forms $q_{t+1}$ and $s_{t+1}$, this would mean making $v_t$ a function of $q_t$ and $s_t$ for the remainder of the prediction. The number of control vectors, required for such a prediction amounts to $H \cdot n_s \cdot k_q$, where $k_q$ is the number of all possible queue states, that can be realized in a single time slot (likely to depend itself on prior queue state, realization of arrival and Bernoulli trial). This obviously requires the maximum amount of computational resources but allows us to truly find the optimal control trajectory that minimizes $J$.

ii) In contrast, a \textit{relaxed prediction} uses only a minimum of control vectors. I.e. in every time slot, a single control vector is chosen and thus $v_t$ is only a function of $t$ for the remainder of the prediction. (Using even less control vectors would not constitute a meaningful prediction for our purposes.) This amounts to only $H$ control vectors being required for the prediction, speeding up the calculation of an optimal control trajectory to minimize $J$ considerably. However, said trajectory might be sub-optimal compared to the true prediction from before and as a consequence the control performance might be worse.

iii) Finally, a mixture of both cases could be implemented, finding a balance between computational complexity and control performance. E.g. one could consider every future DTMC state $s_t$ (up until $s_H$) leading to $H \cdot n_s$ control vectors that have to be determined in order to minimize $J$.

We will define or policy via case ii), i.e. the relaxed prediction and explain the reasoning for this in the end of the section. For what follows, we substitute the control vector $v_t$ with $u_t$ to emphasize, that this is not the actual control of the queueing network but rather the one used for the prediction. Hence, $u_t$ is a quantity that is used internally to define the $\PNC$ policy according to
\begin{multline}
	\label{eq::pnc_expected_evolution}
    \CE{q_{T}}{q_0,\sigma_0}
    = q_0 + \sum_{t = 0}^{T-1} \sum_{\R{s} \in \mathcal{S}} \left( \sigma_0 P^{(t)} \right)^\R{s} R W^\R{s} u_t + T\e{a}
\end{multline}
If we define the trajectory
\begin{equation}
    \label{eq::def_w_tilde}
    \tr{u}_0^{H-1} = \begin{pmatrix}
        u_0 \\
        \vdots \\
        u_{H-1}
    \end{pmatrix}
\end{equation}
and substitute it together with \eqref{eq::pnc_expected_evolution} into \eqref{eq::pnc_objective}, the objective can be rewritten as
\begin{equation*}
    J(q_0,s_0) = J_1(q_0) + J_2(q_0,s_0) \tr{u}_0^{H-1} + \left( {\tr{u}_0^{H-1}} \right)^\intercal J_3(s_0) \tr{u}_0^{H-1}
\end{equation*}
As can be seen, $J_1(q_0)$ will not be influenced by the minimization over $\tr{u}_0^{H-1}$, and $J_3(s_0)$ will stay bounded, since it is not dependent on $q_0$. Therefore, for large enough $q_0$, the linear term $J_2(q_0,s_0) \tr{u}_0^{H-1}$ will always dominate the minimization over $\tr{u}_0^{H-1}$, making it prudent to define our actual objective only over this linear term. This simplifies the minimization from a quadratic to a linear one. (Note that a similar step is also taken in the definition of the $\MW$ policy.) Expanding the remaining constraints from the original system in a straight forward way, we end up with the following definition of the $\PNC$ policy:
\begin{equation}
	\label{eq::pnc_policy_def}
\begin{gathered}
    \phi^{\PNC}(q_0,s_0) = \operatorname{first} \argmin_{\tr{u}_0^{H-1}} \  J_2(q_0,s_0) \tr{u}_0^{H-1}
    \\[2ex]
    \text{subject to} \qquad
    \begin{aligned}
    \tr{C} \tr{u}_0^{H-1} &\leq \tr{c}
    \\
    \tr{A} \tr{u}_0^{H-1} &\leq \tr{b}(q_0)
    \\
    \tr{u}_0^{H-1} &\in \{0,1\}^{Hn_v}
    \end{aligned}
\end{gathered}
\end{equation}
where $\operatorname{first} \argmin ()$ expresses that only the first argument of the trajectory is used as output. An overview of the utilized quantities can be found in Table~\ref{table_01}.

Choosing $H=1$, we end up with the common $\MW$ policy, which is not surprising, since its definition also involves a quadratic objective. And indeed, $\PNC$ would merely be the extension of $\MW$ over multiple time slots, if the $\PNC$ controller would follow a once calculated optimal trajectory to its end (i.e. for $H$ time slots). However, $\PNC$ recalculates this trajectory each time slot, thereby discarding its entire tail. This results in a much improved behavior of $\PNC$ (see section \ref{subsec::synchronized_queues}) but also makes it impossible to infer any properties from $\MW$ to $\PNC$. For more comparisons between the two policies, we refer to \cite{Schoeffauer2018a} and \cite{Schoeffauer2019}.

We continue with the main theorem of this paper, which states throughput optimality of the $\PNC$ policy. Note that this automatically implies throughput optimality for every other MPC controller, that uses a more precise prediction (under the same constraints and objective function).
\begin{theorem}
\label{theorem::to_of_pnc}
The $\PNC$ policy \eqref{eq::pnc_policy_def} is throughput optimal for the system \eqref{eq::basic_queueing}.
\end{theorem}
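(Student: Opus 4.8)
The plan is to use the standard Lyapunov-drift (Foster–Lyapunov) argument for throughput optimality, with the quadratic function $L(q) = q^\intercal q$ as candidate Lyapunov function, but adapted to the Markov-modulated setting by working with a one-step drift conditioned on both $q_0$ and $s_0$ and then averaging against the stationary distribution $\pi$. Concretely, I would first compute (or invoke from \eqref{eq::actual_expected_evolution}) the conditional drift $\CE{L(q_1) - L(q_0)}{q_0,s_0}$ under the $\PNC$ policy and show it decomposes, up to bounded terms, into $2 q_0^\intercal\bigl(\e a + \sum_{\R s}\pi^{\R s} R W^{\R s} v\bigr)$ where $v$ is the control that $\PNC$ actually applies in state $(q_0,s_0)$. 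The subtlety — and the reason the prediction horizon $H$ appears — is that $\PNC$ does not myopically minimize the one-step term; it minimizes the linear functional $J_2(q_0,s_0)\tr u_0^{H-1}$ over the whole horizon and then applies only $u_0$. So the first real step is a lemma: the first component $u_0^\star$ of the $\PNC$ minimizer is itself a minimizer of the one-step linear term $q_0^\intercal R W^{s_0} u_0$ over $\mathcal V$ restricted by the positiveness constraint $-\Rminus u_0 \le q_0$ — or at least achieves the same value of that term as the myopic ($\MW$, $H=1$) choice. This should follow because the horizon-$H$ objective is separable across time slots in the relaxed prediction, the later slots only add nonnegative-weighted copies of analogous linear terms, and (crucially) the constraint coupling between slots is benign: the constraints $\tr C\tr u \le \tr c$ are block-diagonal in time and $\tr A \tr u \le \tr b(q_0)$ encodes positiveness, which for $u_0$ reduces exactly to $-\Rminus u_0 \le q_0$. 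Hence minimizing the sum cannot make the $u_0$-term larger than the $\MW$ optimum.

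Second, I would exploit the throughput-optimality characterization \eqref{eq::def_to}: if the arrival rate $\e a$ is stabilizable, there exist $\lambda^{\R s,v}\ge 0$ with $\sum_v\lambda^{\R s,v}\le 1$ and some $\varepsilon>0$ such that $\e a + \sum_{\R s}\pi^{\R s} R W^{\R s}\sum_v \lambda^{\R s,v} v = -\varepsilon\mathbf 1$. The idea is to compare the $\PNC$ control against this stationary randomized policy. Because $\PNC$'s $u_0$-term is minimal over $\mathcal V$ (state by state in $s_0$, by the lemma above), we get, after taking expectation over $s_0\sim\pi$ and using independence of the $M_t$ and $a_t$ from the policy choice,
\[
\E{L(q_1)-L(q_0)\,\middle|\,q_0} \;\le\; 2\,q_0^\intercal\!\Bigl(\e a + \sum_{\R s}\pi^{\R s} R W^{\R s}\!\sum_v \lambda^{\R s,v} v\Bigr) + K \;=\; -2\varepsilon\,\mathbf 1^\intercal q_0 + K,
\]
for a bounded constant $K$ absorbing the quadratic/trace terms (finite since $\Rminus$, $W^{\R s}$, $\hat a$ are all bounded). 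The one technical wrinkle is handling the positiveness constraint: the comparison randomized control $v$ may occasionally violate $-\Rminus v\le q_0$ when some queue is small, but this only affects a bounded region of state space and changes the drift by a bounded amount, which can be folded into $K$; I would make this precise by the usual argument that components with $q_0^i$ below the maximum efflux contribute $O(1)$ to the drift.

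Third, from the negative-drift-outside-a-bounded-set inequality I would conclude stability in the sense of \eqref{eq::def_stability}, i.e. $\limsup_{\tau\to\infty}\frac1\tau\sum_{t=1}^\tau \E{\mathbf 1^\intercal q_t} < \infty$, by telescoping the drift bound and rearranging (the standard Foster–Lyapunov / Neely-style time-average argument), which in turn gives boundedness of the average queue state. Since the chain $(q_t,s_t)$ is the relevant Markov state and $(s_t)$ is already positive recurrent with stationary $\pi$, this drift condition yields the claimed throughput optimality. I expect the \textbf{main obstacle} to be the first lemma — rigorously showing that the look-ahead over $H$ slots does not spoil the optimality of the applied first control $u_0$ with respect to the one-step linear term, i.e. that the coupling introduced by $\tr A\tr u_0^{H-1}\le\tr b(q_0)$ and by the predicted-distribution weights $(\sigma_0 P^{(t)})^{\R s}$ in $J_2$ cannot trade a worse $u_0$ for better later slots. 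This is exactly where the paper's claim that no terminal set is needed must be earned; the key structural facts to lean on are the separability of the relaxed-prediction objective across time and the fact that the positiveness constraint on $u_0$ alone is no tighter than in the $H=1$ case.
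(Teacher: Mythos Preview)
Your key lemma --- that the first component $u_0^\star$ of the $\PNC$ minimizer achieves the same one-step value $q_0^\intercal R W^{s_0}u_0$ as the $\MW$ choice --- is false, and the argument you give for it does not hold. Here is a counterexample in a conventional network: one queue with $q_0=1$, two exit links $r^1=r^2=(-1)$, constituency $u^1+u^2\le 1$, $\e a=0$, a deterministic chain with $W^{s_0}=\operatorname{diag}(\epsilon,0)$ and $W^{s_1}=\operatorname{diag}(0,1)$ for small $\epsilon>0$, and horizon $H=2$. Then $J_2\,\tr u_0^{1}=(-4\epsilon,0)u_0+(0,-2)u_1$, and the positiveness coupling forces $u_1^1+u_1^2\le 1-u_0^1-u_0^2$. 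The $\PNC$ minimizer is $u_0=(0,0)$, $u_1=(0,1)$ (value $-2$), so $\PNC$ \emph{idles} in slot $0$, whereas $\MW$ takes $u_0=(1,0)$ with one-step value $-\epsilon<0$. The look-ahead trades a strictly worse $u_0$ for a better later slot, precisely via the positiveness coupling you call ``benign''. The two structural facts you invoke --- separability of the objective and that $u_0$'s own constraint row equals the $\MW$ constraint --- only say the $\MW$ choice is \emph{feasible} for $u_0$; they do not say $\PNC$ selects it. Your approach might be salvageable by proving instead that the gap $q_0^\intercal R W^{s_0}(u_0^{\PNC}-u_0^{\MW})$ is bounded uniformly in $q_0$ (since the coupling bites only when some queue is below $Hn_v$), but this is nontrivial and is not what you argued.

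The paper avoids this issue entirely by taking a different Lyapunov function: not $q^\intercal q$, but the \emph{horizon-adapted} quantity
\[
f(q_0,s_0)=\min_{\tr z_0^{H-1}\in\mathcal{C}_z}\CE{\textstyle\sum_{t=1}^H q_t^\intercal q_t}{q_0,s_0},
\]
where the internal minimization uses DTMC-state-dependent controls $z_t(\R s)$ and no positiveness constraints. Shifting this by one step and telescoping produces, inside the drift, a horizon-$(H{+}1)$ minimization over $(v_0,\tr z_1^H)$; a change of variables $z_t^i=\mu_t+\delta_t^i$ identifies the $\mu$-part with the $\PNC$ trajectory, and Lemma~\ref{lemma::diff} (a uniform bound on the cost of reinstating positiveness) closes the comparison. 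At no point does the paper claim that $u_0^{\PNC}$ is one-step optimal; the horizon built into $f$ is what earns the ``no terminal set'' result.
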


\begin{table*}[ht!]
\caption{Extended Formulas for the Optimization Problem}
\centering
\normalsize
\begin{tabular}{|p{0.3\linewidth}|p{0.64\linewidth}|}
\hline
\hline
\begin{center}
Expected value of weight matrix $W^{s_t}$
\end{center}
&
\begin{center}
Linear objective $J_2$
\end{center}
\\
\vspace{-2.8mm}
\begin{equation}
\begin{gathered}
    \e{W}_t(s_0) = 
	\CE{W^{s_t}}{s_0} 
	\\  =
    \left( \sigma_0 P^{(t)} \otimes I_{n_s} \right)
    \begin{pmatrix}
		W^1 \\ W^2 \\ \vdots \\ W^{n_s}
    \end{pmatrix}^\intercal
    \end{gathered}
\end{equation}
&
\begin{equation}
    \label{eq::J2_formula}
    J_2 = 
    2q_0^\intercal R
	\begin{pmatrix}
		H \e{W}_0(s_0)
		\\
		(H-1) \e{W}_1(s_0)
		\\
		\vdots
		\\
		 \e{W}_{H-1}(s_0)
	\end{pmatrix}^\intercal
	+ \e{a}^\intercal R
	\begin{pmatrix}
		(H+1)(H-0) \e{W}_0(s_0)
		\\
		(H+2)(H-1) \e{W}_1(s_0)
		\\
		\vdots
		\\
		2H \e{W}_{H-1}(s_0)
	\end{pmatrix}^\intercal
\end{equation}
\\
\hline
\begin{center}
Constituency constraints
\end{center}
&
\begin{center}
Positiveness constraints
\end{center}
\\
\begin{equation}
	\label{eq::table_constituency}
\underbrace{
    		\left( \vphantom{\frac{1}{2}} I_{H} \otimes C \right) 
	}_{\displaystyle \tr{C} }  	
 	\tr{u}_0^{H-1} \leq
	\underbrace{ 	
 		\mathbf{1}_{H} \otimes c \vphantom{\left( \vphantom{\frac{1}{2}} I_{H} \otimes C \right) }
 	}_{\displaystyle \tr{c} \vphantom{\tr{C}} }
\end{equation}
&
\vspace{-2.6mm}
\begin{equation}
	\label{eq::table_positiveness}
    \underbrace{
    \begin{pmatrix}
        \Rminus &  & &  \\
        R & \Rminus &   &  \\
        \vdots &  & \ddots \\
        R & \dots & R & \Rminus
    \end{pmatrix}
    }_{\displaystyle \tr{A}}
    \tr{u}_0^{H-1}
    \leq 
    \underbrace{
    \begin{pmatrix}
    	q_0 \\ q_0 + \e{a} \\ \vdots \\ q_0 + (H-1) \e{a}
    \end{pmatrix}
    }_{\displaystyle \tr{b}(q_0) }
\end{equation}
\\
\hline
\hline
\end{tabular}
\label{table_01}
\end{table*}

\section{Proof of Theorem \ref{theorem::to_of_pnc}}
\label{sec::proof}

We will now prove, that $\phi^\PNC$ is throughput optimal. And in contrast to the usual stability-related proofs employed for MPC controllers, we will not rely on a terminal set.

\subsection{Preliminaries}

It will often become necessary to upper and lower-bound certain expressions. We will use $K_i \in \mathbb{R}_+$, $i\in \mathbb{N}$ to denote those bounds or variables, whose values are of no further interest and are obvious to calculate. Crucially, any $K_i$ will be independent of the initial system state $q_0$!

We will use gothic letters to express realizations of random variables, such that e.g. $\R{s_t}$ is a realization of $s_t$, hence $\R{s_t} \in \mathcal{S}$. And because the corresponding set of realizations will always be very clear from the context, we will use the succinct notation $\sum_{\R{s_t}}$ instead of $\sum_{\R{s_t} \in \mathcal{S}}$ for the sum of all realizations (as is needed for e.g. expressing the expectation).

Given a trajectory $\tr{x}$ of control vectors of certain length, we use $\tr{x} \in \mathcal{P}(q_0)$, to express that $\tr{x}$ abides to the positiveness constraints $\tr{A} \tr{x} \leq \tr{b}(q_0)$, where $\tr{A}$ and $\tr{b}(q_0)$ are defined as in \eqref{eq::table_positiveness}, expect for a possibly different value of $H$ (depending on the length of $\tr{x}$). Analogue, $\tr{x} \in \mathcal{C}$ will express, that $\tr{x}$ abides to the constituency constraints as in \eqref{eq::table_constituency}.

Finally, we make the definitions $\Delta_0^T := q_T - q_0$. Keep in mind, that analogue to $q_t$ being a function of all prior stochastics and controls, $\Delta_0^T$ is of course a function of all stochastics and controls in the time slot $0,\dots T-1$. With the definition $\norm{q_t} := q_t^\intercal q_t$ (which is not meant to be a norm) this gives raise to
\begin{equation}
	\label{eq::to_sub_to_q0}
	\norm{q_T} = \norm{\D_0^T} + \norm{q_0} + 2q_0^\intercal \D_0^T
\end{equation}
Note, that this decomposition corresponds to the one for the objective function $J$ and we can now restate the objective of the $\PNC$ policy, $J_2$, as
\begin{equation}
	\label{eq::to_objective_better}
	J_2(q_0,s_0) = \CE{ \sum_{t=1}^{H} 2q_0^\intercal \D_0^t }{q_0,s_0}
\end{equation}
With this notation we formulate the next lemmas, needed for the proof.

\begin{lemma}
The difference $\Delta_0^T$ between two queue states is bounded (element-wise) by
\begin{equation}
	\label{eq::to_gen_bounds}
	- T n_v \mathbf{1}_{n_q}
	\leq
	\D_0^T
	\leq
	T n_v \mathbf{1}_{n_q}  + T \hat{a}
\end{equation}
leading to
\begin{equation}
	\label{eq::to_quad_to_lin}
	\norm{q_0} + 2q_0^\intercal  \D_0^T
	\leq
	\norm{q_T} 
	\leq \norm{q_0} + 2q_0^\intercal  \D_0^T + K_1
\end{equation}
\end{lemma}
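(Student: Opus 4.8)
The plan is to derive the element-wise bounds \eqref{eq::to_gen_bounds} directly from the one-step evolution \eqref{eq::basic_queueing} and then to feed them into the algebraic identity \eqref{eq::to_sub_to_q0} to obtain \eqref{eq::to_quad_to_lin}. Throughout, $T$ is treated as a fixed positive integer.

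For the first inequality I would telescope the system equation: since $q_{t+1} - q_t = R M_t v_t + a_t$, summing over $t = 0, \dots, T-1$ gives $\D_0^T = \sum_{t=0}^{T-1} \left( R M_t v_t + a_t \right)$. Each summand $R M_t v_t$ is a signed sum of columns of $R$ (namely those $j$ with $(M_t v_t)^j = 1$, since $M_t$ is diagonal $0/1$ and $v_t \in \{0,1\}^{n_v}$), so for every coordinate $i$ one has $\bigl| (R M_t v_t)^i \bigr| \le \sum_{j=1}^{n_v} |R_{ij}| \le n_v$ because the entries of $R$ lie in $\{-1,0,1\}$. Combined with $\mathbf{0} \le a_t \le \hat{a}$ this yields $- n_v \mathbf{1}_{n_q} \le R M_t v_t + a_t \le n_v \mathbf{1}_{n_q} + \hat{a}$ coordinate-wise, and summing over the $T$ time slots gives \eqref{eq::to_gen_bounds}. (The sharper lower bound $\D_0^T \ge -q_0$ that the positiveness constraint would provide is deliberately not used here; the weaker, $q_0$-independent bound is what we need.)

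For the second statement I would start from \eqref{eq::to_sub_to_q0}, i.e. $\norm{q_T} = \norm{\D_0^T} + \norm{q_0} + 2 q_0^\intercal \D_0^T$. Since $\norm{\D_0^T} = (\D_0^T)^\intercal \D_0^T \ge 0$, the left inequality of \eqref{eq::to_quad_to_lin} is immediate. For the right inequality it remains to bound $\norm{\D_0^T}$ by a constant independent of $q_0$: by \eqref{eq::to_gen_bounds} every coordinate satisfies $\bigl|(\D_0^T)^i\bigr| \le T n_v + T\hat{a}^i$, hence $\norm{\D_0^T} = \sum_{i=1}^{n_q} \bigl((\D_0^T)^i\bigr)^2 \le n_q \bigl( T n_v + T \max_i \hat{a}^i \bigr)^2 =: K_1$, which depends only on the fixed quantities $T, n_q, n_v, \hat{a}$. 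Substituting $\norm{\D_0^T} \le K_1$ into the identity gives the right inequality.

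The argument is essentially a direct computation with no real obstacle; the one point that warrants care is insisting that $K_1$ be assembled solely from the $q_0$-independent estimate \eqref{eq::to_gen_bounds} — invoking $q_T \ge 0$ or the positiveness constraints $\tr{A}\tr{u}_0^{H-1} \le \tr{b}(q_0)$ would reintroduce a dependence on $q_0$ — so that $K_1$ legitimately falls under the convention for the constants $K_i$ fixed in the Preliminaries. It is also worth noting explicitly that $K_1$ does not depend on the realizations of the $M_t$, on the arrivals $a_t$, or on the applied controls either, so the bound holds uniformly.
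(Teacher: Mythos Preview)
Your proposal is correct and follows essentially the same approach as the paper's own proof; the paper merely states in one sentence that per queue per time slot the efflux is at most $n_v$ and the influx at most $n_v + \hat{a}$, while you make the telescoping and the coordinate-wise estimate $\bigl|(RM_tv_t)^i\bigr|\le n_v$ explicit. Your derivation of \eqref{eq::to_quad_to_lin} from \eqref{eq::to_sub_to_q0} via $\norm{\D_0^T}\ge 0$ and $\norm{\D_0^T}\le K_1$ is exactly what the paper intends but leaves implicit, and your care in keeping $K_1$ free of $q_0$ matches the convention fixed in the Preliminaries.
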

\begin{proof}
Between time slots $0$ and $T$ we have at best a constant efflux of $n_v$ or at worst a constant influx of $n_v + \hat{a}$ packets per queue per time slot (since there are at most $n_v$ links to fill or drain any given queue).
\end{proof}

\begin{lemma}
\label{lemma::diff}
The difference between the minimization that originates from the definition of the $\PNC$ policy (using the formulation from \eqref{eq::to_objective_better}), and the same minimization but without considering any positiveness constraints can be bounded by
\begin{gather}
\label{eq::to_lemma}
\begin{aligned}
    \min_{ \tr{u}_0^{H-1} \in \mathcal{C} \cap \mathcal{P}(q_0) }
    	&\CE{ 
    		\sum_{t=1}^{H}  2q_0^\intercal \D_0^t
    	}
    	{
    		q_0,s_0
    	}
    \\
    -
    \min_{ \tr{u}_0^{H} \in \mathcal{C} } \hspace{6mm}
    	&\CE{ 
    		\sum_{t=1}^{H}  2q_0^\intercal \D_0^t
    	}
    	{
    		q_0,s_0
    	}    
\end{aligned}
    \\ \nonumber
    \leq
    \left( Hn_v^2-n_v \right) \left( H + 1 \right)
    = K_2
\end{gather}
\end{lemma}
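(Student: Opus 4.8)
The plan is to relate the two minimization problems by exhibiting, for (almost) any feasible point of the unconstrained-in-$\mathcal{P}$ problem, a nearby feasible point of the constrained problem, and to control the objective gap by the element-wise bounds on $\D_0^t$ from the previous lemma. Concretely, I would start from an optimal trajectory $\tr{u}_0^{H}$ for the second (lower) minimization, which lives in $\mathcal{C}$ over $H+1$ slots but may violate positiveness. I would then construct a trajectory $\tr{u}_0^{H-1} \in \mathcal{C} \cap \mathcal{P}(q_0)$ of length $H$ by truncating off the last block $u_{H}$ and, slot by slot from $t=0$ upward, zeroing out just enough activated links in $u_t$ so that the positiveness constraint $q_0 + (t\text{-th partial efflux}) \geq 0$ holds. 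The key quantitative observation is that in each of the $H$ time slots we can be forced to deactivate at most $n_v$ links — one per queue that would otherwise go negative — but since $\Rminus$ has at most $n_v$ negative-contributing columns and at most one $\{-1\}$ per column, the number of links we must drop per slot is bounded by $n_v$ (and actually one less if we argue per-queue, which is where the $-n_v$ refinement in $Hn_v^2-n_v$ comes from). Dropping one link changes $\D_0^t$ by at most $n_v \mathbf{1}_{n_q}$ in each later slot.

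Next I would bound the resulting change in the objective $\CE{\sum_{t=1}^{H} 2q_0^\intercal \D_0^t}{q_0,s_0}$. Here is the subtle point I want to get right: the factor $q_0$ is still present, so a naive bound would be $q_0$-dependent and violate the requirement that $K_2$ be independent of $q_0$. The resolution is that the modification is \emph{monotone in the right direction}: deactivating links can only \emph{increase} the efflux deficit, i.e. it makes each $\D_0^t$ larger component-wise (fewer packets removed), hence makes $2q_0^\intercal \D_0^t$ larger, hence makes the value of the first (constrained) minimization no smaller than what we would get by this surgery — but crucially the quantity we must bound is the \emph{difference} of the two minima, and the first minimum is over a \emph{smaller} feasible set, so it is automatically $\geq$ the second minimum; the lemma claims an \emph{upper} bound on this difference. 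So instead I should argue the other direction: take the optimizer of the first (constrained) problem and note it is also feasible for the second problem after appending a zero block $u_H = \mathbf 0$, which is always in $\mathcal{C}$; then the second minimum is $\leq$ first minimum $+\,\CE{2q_0^\intercal \D_0^H}{q_0,s_0}$ evaluated along that trajectory with $u_H=\mathbf 0$, and the extra term $2q_0^\intercal(\D_0^H-\D_0^{H-1})$ is again $q_0$-dependent.

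Since both of these direct comparisons leak a $q_0$ factor, the real argument must be the surgery argument done carefully: start from the \emph{second} problem's optimizer $\tr{u}_0^{H}$, perform the per-slot deactivation to land in $\mathcal{C}\cap\mathcal{P}(q_0)$, and observe that along the modified trajectory each $\D_0^t$ differs from the original by a vector whose components lie in $[0, t\,n_v \mathbf 1]$ coming from \emph{dropped} links only — but this still multiplies $q_0$. The way the paper evidently escapes this is that the objective it actually uses, via \eqref{eq::to_objective_better}, already had the $q_0$ absorbed: re-examining, $2q_0^\intercal \D_0^t = \norm{q_t} - \norm{q_{t-1}} - (\text{stuff})$, so telescoping $\sum_{t=1}^H 2q_0^\intercal\D_0^t$ and using $\eqref{eq::to_quad_to_lin}$ to trade $\norm{q_t}$ for $\norm{q_0}+2q_0^\intercal\D_0^t+K_1$ is circular. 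I therefore expect the intended proof is purely combinatorial on the \emph{control-vector difference}: the two problems have objective $J_2 \tr{u}$ which is \emph{linear} in $\tr{u}$ with coefficient vector $J_2$ from \eqref{eq::J2_formula}, and one bounds $|J_2(\tr{u}^{(1)}-\tr{u}^{(2)})|$ — but that coefficient contains $q_0^\intercal R$. The honest statement must then be that the $q_0$-terms \emph{cancel}: if $\tr{u}^{(1)}$ is obtained from $\tr{u}^{(2)}$ by only \emph{zeroing entries}, then in slots where positiveness was already tight the contribution of those zeroed links to $q_0^\intercal R(\cdot)$ is controlled by the tightness $q_0^i=0$ at exactly the queues that mattered. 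That is the mechanism; spelling it out is the main obstacle, and I would do it by: (i) reducing to the linear objective $J_2\tr u$; (ii) showing $\tr u^{(1)}$ and the truncation of $\tr u^{(2)}$ differ only by $\leq n_v$ zeroed links per slot, at queues where $q_0^i$ plus the prior efflux is $0$; (iii) concluding the difference in $q_0^\intercal R(\cdot)$ over slot $t$ is $0$ from the $q_0$-part and at most $n_v$ (number of dropped links) times the $\e a^\intercal R$-part's per-link coefficient, which over all $H$ slots and the weights $(H+1-t)$ in \eqref{eq::J2_formula} sums to at most $(Hn_v^2-n_v)(H+1)=K_2$. The main obstacle, then, is the bookkeeping in step (iii): correctly pairing each dropped link with a queue at which the relevant partial sum of $q_0$ vanishes, so that no $q_0$-dependence survives, and then summing the arithmetic-series of horizon weights to hit the stated constant.
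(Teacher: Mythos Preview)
Your diagnosis of the difficulty is right: any direct comparison of the two minima via feasible-point surgery produces a term proportional to $q_0$, and the whole point of the lemma is that no such term survives. But your proposed resolution in step (iii) does not work. The objective is $\sum_{t=1}^H 2q_0^\intercal \D_0^t$, and the coefficient vector is the \emph{initial} state $q_0$, not the running state $q_0+\D_0^{t-1}$. So when you drop a link at slot $t$ because the running state at its origin queue $i$ has hit zero, what you know is $q_0^i + (\D_0^{t-1})^i = 0$, not $q_0^i = 0$. The dropped link contributes $2q_0^\intercal r^j\cdot(\text{weight})$ to the objective difference, which involves $q_0^i$ (and possibly $q_0$ at the destination queue), neither of which is forced to vanish. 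Your cancellation simply does not happen.

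The paper's argument is much more direct and avoids surgery entirely. The key observation you are missing is this: the maximum efflux at any single queue over $H$ slots is $Hn_v$ packets. Hence if $q_0 \geq Hn_v\,\mathbf{1}_{n_q}$ (element-wise), the positiveness constraint $\mathcal{P}(q_0)$ is vacuous and the two minima coincide. Conversely, if the constrained minimum is strictly larger, then some entry of $q_0$ must satisfy $q_0^i < Hn_v$. So in the only regime where the difference is nonzero, the relevant components of $q_0$ are themselves bounded by $Hn_v-1$. Since the objective is linear in $q_0$ and $|\D_0^t| \leq t\,n_v\,\mathbf{1}_{n_q}$, the worst-case gap is attained at $q_0 = (Hn_v-1)\mathbf{1}_{n_q}$, and summing $2(Hn_v-1)\cdot t\,n_v$ over $t=1,\dots,H$ gives the stated constant $(Hn_v^2-n_v)(H+1)$. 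The $q_0$-dependence disappears not by cancellation but because $q_0$ is small whenever it matters.
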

\begin{proof}
Clearly, the maximum deficit, that $\D_0^t$ can generate is (element-wise) less than $t n_v \mathbf{1}_{n_q}$ (all links drain a queue over $t$ steps).
For a single queue, the \textit{most} efflux in $H$ time slots is $Hn_v$ packets. 
Hence, if $q_0 \geq Hn_v \mathbf{1}_{n_q}$, a control trajectory \textit{cannot} violate the positiveness. Conversely, if a link cannot be activated due to the positiveness constraints, at least one entry of $q_0$ must be smaller than $H n_v$.

Due to the linearity, the largest difference in the minimizations will be found, if $q_0 = (H n_v - 1) \mathbf{1}_{n_q}$ (possibly denying any activation for the minimization with the positiveness constraints). This, together with the initial bound on $\D_0^t$ leads directly to
\begin{equation}
	\sum_{t=1}^H 2 (H n_v - 1) \mathbf{1}_{n_q}^\intercal  t n_v \mathbf{1}_{n_q} = \left( Hn_v^2-n_v \right) \left( H + 1 \right)
\end{equation}
which is an upper bound for the difference in question.
\end{proof}

Finally, the following theorem will allow us to express our definition of stability by the means of a Ljapunov function.
\begin{lemma}
	\label{lemma::foster}
	A policy $\phi$ stabilizes the system \eqref{eq::basic_queueing} under a certain arrival rate $\e{a}$, if we can find a function $f: \mathcal{Q} \times \mathcal{S} \to \mathbb{R}_+$ with the property
	\begin{equation}
		\label{eq::to_drift}
    	\CE{ f(q_{1},s_{1}) - f(q_0,s_0) }{q_0} \leq K_4 - K_5 \mathbf{1}_{n_q}^\intercal q_0
	\end{equation}
\end{lemma}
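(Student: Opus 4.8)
The plan is to run the standard Lyapunov drift (Foster-type) argument: iterate \eqref{eq::to_drift}, telescope over time, and exploit the nonnegativity of $f$. Since $\phi$ is a stationary policy, the pair $(q_t,s_t)$ is a time-homogeneous Markov chain; hence \eqref{eq::to_drift} is really a property of its one-step transition kernel and holds with the index $0$ replaced by any $t\ge 0$. Taking the unconditional expectation (tower rule) then gives, for every $t$,
\begin{equation}
	\E{f(q_{t+1},s_{t+1})} - \E{f(q_t,s_t)} \le K_4 - K_5\,\E{\mathbf{1}_{n_q}^\intercal q_t},
\end{equation}
where each expectation is finite since \eqref{eq::to_gen_bounds} confines $q_t$ to a deterministic finite set for each fixed $t$.

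Next I would sum this over $t=0,\dots,\tau-1$. The left-hand side telescopes to $\E{f(q_\tau,s_\tau)} - f(q_0,s_0)$, and dropping the nonnegative term $\E{f(q_\tau,s_\tau)}$ and rearranging yields
\begin{equation}
	K_5 \sum_{t=0}^{\tau-1} \E{\mathbf{1}_{n_q}^\intercal q_t} \le \tau K_4 + f(q_0,s_0).
\end{equation}
Dividing by $\tau K_5 > 0$ and letting $\tau\to\infty$ gives
\begin{equation}
	\limsup_{\tau\to\infty} \frac{1}{\tau}\sum_{t=0}^{\tau-1} \E{\mathbf{1}_{n_q}^\intercal q_t} \le \frac{K_4}{K_5} < \infty ,
\end{equation}
so the time-averaged expected total queue occupancy stays bounded --- which is exactly the content of ``the average queue state remains bounded'' by which \eqref{eq::def_stability} characterizes stability.

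To recover the increment form of \eqref{eq::def_stability} verbatim, write $\frac{1}{\tau}\sum_{t=1}^{\tau}\bigl(a_t + RM_t\phi(q_t,s_t,\omega_t)\bigr) = \frac{1}{\tau}\D_0^{\tau} = \frac{1}{\tau}(q_\tau - q_0)$, so it remains to show $\E{q_\tau}/\tau\to\mathbf{0}$. The bounds \eqref{eq::to_gen_bounds} with $T=1$ imply that $t\mapsto \E{\mathbf{1}_{n_q}^\intercal q_t}$ changes by at most a fixed constant per step; were there a sequence $\tau_k\to\infty$ with $\E{\mathbf{1}_{n_q}^\intercal q_{\tau_k}}\ge\varepsilon\tau_k$, this Lipschitz bound would keep $\E{\mathbf{1}_{n_q}^\intercal q_t}$ above $\tfrac{\varepsilon}{2}\tau_k$ over a window of length proportional to $\tau_k$ ending at $\tau_k$, contradicting the boundedness of the time-average just established. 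Hence $\E{\mathbf{1}_{n_q}^\intercal q_\tau}/\tau\to 0$, and since $q_\tau\ge\mathbf{0}$ this forces $\E{q_\tau}/\tau\to\mathbf{0}$; so \eqref{eq::def_stability} holds in expectation, and the almost-sure version (if desired) follows from the ergodic theorem applied to the Markov chain $(q_t,s_t)$.

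I do not expect a genuine obstacle here --- the core is the textbook telescoping-drift estimate. The only step deserving care is the last passage, from a bounded time-average of $\E{q_t}$ to the limit-zero increment form of \eqref{eq::def_stability}, and that is precisely where the uniform one-step bound \eqref{eq::to_gen_bounds} enters; one should also be explicit about which ``state'' the drift \eqref{eq::to_drift} is conditioned on, but since its right-hand side does not depend on $s_0$, taking full expectations as above sidesteps that issue entirely.
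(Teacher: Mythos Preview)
Your proof is correct and follows essentially the same telescoping Lyapunov-drift argument as the paper: take expectations of \eqref{eq::to_drift}, sum over $t=0,\dots,\tau-1$, drop the nonnegative term $\E{f(q_\tau,s_\tau)}$, and divide by $\tau K_5$ to bound the time-averaged expected total queue size. Your final passage from a bounded time-average to $\E{q_\tau}/\tau\to\mathbf{0}$ via the Lipschitz-window argument is in fact more carefully justified than the paper's, which simply asserts that bounded time-averages together with bounded one-step increments force the sequence $(\E{q_t})$ itself to be bounded.
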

\begin{proof}
Without further ado, we take expectations and sum \eqref{eq::to_drift} over multiple time slots to obtain the following sequence of arguments
\begin{gather}
	\nonumber
    \E{f(q_T,s_T)} - \E{f(q_0,s_0)}
    \leq T K_4 - K_5 \sum_{t = 0}^{T - 1} \E{ \mathbf{1}_{n_q}^\intercal q_t }
\\
	\nonumber
    \Longrightarrow \qquad
    -\E{f(q_0,s_0)} 
    \leq T K_4 - K_5 \sum_{t = 0}^{T - 1} \E{ \mathbf{1}_{n_q}^\intercal q_t }
	\hphantom{\Longrightarrow \qquad}    
\\
    \Longrightarrow \qquad
    \frac{1}{T} \sum_{t=0}^{T-1} \E{ \mathbf{1}_{n_q}^\intercal q_t } 
    \leq
    \frac{K_4}{K_5} + \frac{\E{f(q_0)}}{T K_5}
    \hphantom{\Longrightarrow }
\\
	\nonumber
    \Longrightarrow \qquad
    \lim_{T\to\infty}
    \frac{1}{T} \sum_{t=0}^{T-1} \E{ q_t } \leq
    \frac{K_4}{K_5} \mathbf{1}_{n_q} 
    \hphantom{\Longrightarrow \qquad}
\end{gather}
Since $\E{q_t} \geq \mathbf{0}$ always, and the difference between consecutive states is bounded, all elements of the sequence $(\E{q_t})$ must be bounded. From there, the stability condition \eqref{eq::def_stability} follows immediately. (Note that $\E{\cdot}$ is the expectation of the stochastic in the system model and not over time.)
\end{proof}

\subsection{Main Proof}

We now start with the main part of the proof. We will define a Ljapunov function $f(q_t,s_t)$ and show that \textit{if} the system is governed by the $\PNC$ policy, $f$ fulfills Lemma \ref{lemma::foster} for any possibly stabilizable arrival rate (see \eqref{eq::def_to}).

For a $\PNC$ policy with horizon $H+1$, we employ the following Ljapunov function:
\begin{equation}
    \label{eq::to_f_tilde}
\begin{gathered}    
    f(q_0,s_0) =
    \min_{ \tr{z}_0^{H-1} \in \mathcal{C}_z } 
    \CE{ 
    		\sum_{t=1}^{H} \norm{q_{t}}
    }{q_0,s_0}
\end{gathered}
\end{equation}
A few remarks are in order: i) The minimization in $f$ mimics the $\PNC$ policy, but is in fact independent of it. ii) The horizon of the minimization of $f$ is chosen to be one step smaller than that of the $\PNC$ policy. iii) The control vectors are now denoted by $z$ instead of $v$ or $u$, because they run independent of the actual control $v_t$ or the predicted control inside the $\PNC$ controller $u_t$. Crucially, the control trajectory $\tr{z}_0^{H-1}$ is state sensitive regarding the DTMC $(s_t)$ of the weight matrices and is \textit{not} constraint by the positiveness constraints $\mathcal{P}$. I.e.
\begin{multline}
	\label{eq::to_expected_evolution}
    \CE{q_{T}}{q_0,\sigma_0}
    = q_0 + \sum_{t = 0}^{T-1} \sum_{\R{s} \in \mathcal{S}} \left( \sigma_0 P^{(t)} \right)^\R{s} R W^\R{s} z_t(\R{s}) + T\e{a}
\end{multline}

This last point is important: the minimization of the actual $\PNC$ policy (with horizon $H+1$) uses the control trajectory $\tr{u}_0^{H}$, which assigns to each time slot of the prediction \textit{a single} control vector $u_t \in \mathcal{V}$. In contrast and per definition, the minimization in the Ljapunov function $f$ uses the control trajectory $\tr{z}_0^{H-1}$, which assigns to each time slot \textit{and} each possible realization of $s_t$ a control vector $z_t(s_t) \in \mathcal{V}$. For succinct notation we define $z_t^\R{s} := z_t(s_t = \R{s})$. The trajectory of the control vectors $z_t^\R{s}$ is defined by first stacking over all realization of $(s_t)$ and then over all time slots:
\begin{equation}
	\tr{z}_0^{H} 
	=
	\begin{pmatrix}
		z'_0 \\
		z'_1 \\
		\vdots \\
		z'_H
	\end{pmatrix}
	,\qquad \text{with} \qquad
	z'_t =
	\begin{pmatrix}
		z_t^1 \\
		z_t^2 \\
		\vdots \\
		z_t^{n_s}
	\end{pmatrix}
\end{equation}

The constraint $\tr{z}_0^{H-1} \in \mathcal{C}_z$ in the definition of $f$ expresses, that each single control vector $z_t^\R{s}$ is constraint by the constituency in the usual way ($C z_t^\R{s} \leq c$), and therefore $\mathcal{C}_z$ is a straight forward expansion of $\mathcal{C}$.

We can now start expressing the first term of \eqref{eq::to_drift} (for now conditioning on $s_0$ as well) as
\begin{gather}
	\label{eq::to_simple_first_term}
\CE{ f (q_1,s_1) }{ q_0,s_0 }	 =
	\CE{
		\min_{ \tr{z}_1^{H} \in \mathcal{C} }
		\CE{
			\sum_{t=2}^{H+1} \norm{q_t}
		}{
			q_1 , s_1
		}
	}{
		q_0,s_0
	}
\end{gather}
Note that this term is shifted in time. The control $v_0$, that leads from $q_0$ to $q_1$ is exactly the control, that is defined by the control policy $v_0 = \phi^\PNC$ and that actually affects the network. In contrast, the dummy controls $\tr{z}_1^H$ are part of the function $f$, do not affect the actual network and therefore are independent of the chosen policy.

Using \eqref{eq::to_sub_to_q0} the inner term of \eqref{eq::to_simple_first_term} can become
\begin{align*}
	\sum_{t=2}^{H+1} \norm{q_t}
	&=
	\sum_{t=2}^{H+1} \Big(
		\norm{q_0} + \norm{\D_0^t} + 2q_0^\intercal \D_0^t
	\Big)
	\\
	&\leq
	H \norm{q_0} + K_6
	+
	\sum_{t=2}^{H+1} \Big(
		2q_0^\intercal \D_0^t
	\Big)
\end{align*}
The individual sums of $\D_0^t$ can be bounded according to \eqref{eq::to_gen_bounds} by some constant $K_6$ which is unaffected by the minimization or the expectation from \eqref{eq::to_simple_first_term}. The same holds for $\norm{q_0}$ if we notice, that conditioning on $q_1$ is the same as conditioning on $q_0$ \textit{and} $\D_0^1$, since $q_1 = q_0 + \D_0^1$. Hence, both terms can be pulled to the left-hand-side of \eqref{eq::to_simple_first_term}, as seen in the first two lines of \eqref{eq::to_long}. In what follows, we will step by step dissolve the outer expectation and expand the sum, which is possible, since the minimization is linear in $\tr{z}_1^H$ and the constraints act on each control vector separately:
\begin{align}
	\label{eq::to_long}
	& \hspace{5mm}
	\CE{ f (q_1,s_1) }{ q_0,s_0 } - H \norm{q_0} - K_6 
\\ 
	\nonumber
	& \leq 
	\mathbb{E} \Bigg[
		\min_{ \tr{z}_1^{H} \in \mathcal{C}_z }
		\mathbb{E} \Bigg[
			\sum_{t=2}^{H+1}			
				2q_0^\intercal \D_1^t
		\, \Bigg| \,
			q_1 , s_1
		\Bigg]
	\, \Bigg| \,
		q_0,s_0
	\Bigg]
\\ 
	\nonumber
	& = 
	\mathbb{E} \Bigg[
		\min_{ \tr{z}_1^{H} \in \mathcal{C}_z}
		\mathbb{E} \Bigg[
			\sum_{t=2}^{H+1}			
				2q_0^\intercal \D_1^t
		\, \Bigg| \,
			q_0 ,\D_0^1, s_1
		\Bigg]
	\, \Bigg| \,
		q_0,s_0
	\Bigg]
\\ 
	\nonumber
	& = 
	\sum_{\R{q_0},\R{s_1}}
		\min_{ \tr{z}_1^{H} \in \mathcal{C}_z }
		\mathbb{E} \Bigg[
			\sum_{t=2}^{H+1}			
				2q_0^\intercal  \D_1^t
		\, \Bigg| \,
			\R{q_0} , \R{s_1}
		\Bigg]
	\CP{
		\R{q_0} , \R{s_1}
	}{
		q_0,s_0
	}
\\
	\nonumber
	& \hspace{1mm}
	\begin{aligned}
	=
	\sum_{\R{q_0},\R{s_1}}
		\min_{ \tr{z}_1^{H} \in \mathcal{C}_z }
			\mathbb{E} \Bigg[
			\sum_{t=1}^{H}
			\sum_{\tau = t}^{H}			
				2q_0^\intercal  \left(
				R M_t z_t^{s_t} + a_t \right)
		\, \Bigg| \,
			\R{q_0} , \R{s_1}
		\Bigg] &
	\\
	\cdot	
	\CP{
		\R{q_0}
	}{
		q_0
	}
	\CP{
		\R{s_1}
	}{
		s_0
	} \, &
	\end{aligned}
\\ 
	\nonumber
	& \hspace{1mm}
	\begin{aligned}
	=
	\sum_{\R{s_1}}
		\min_{ \tr{z}_1^{H} \in \mathcal{C}_z }
			\mathbb{E} \Bigg[
			\sum_{t=1}^{H}
			\sum_{\tau = t}^{H}				
				2q_0^\intercal  \left(
				R M_t z_t^{s_t} + a_t \right)
		\, \Bigg| \,
			q_0 , \R{s_1}
		\Bigg] &
	\\
	\cdot
	\CP{
		\R{s_1}
	}{
		s_0
	} \, &
	\end{aligned}
\\ 
	\nonumber
	& \hspace{1mm}
	\begin{aligned}
	=
	\sum_{\R{s_1}}
		\min_{ \tr{z}_1^{H} \in \mathcal{C}_z }
			\sum_{t=1}^{H}
			\sum_{\tau = t}^{H}	
			\sum_{\R{s_t}}			
				2 q_0^\intercal  \left(
				R W^\R{s_t} z_t^\R{s_t} + a_t \right)
		\CP{ \R{s_t} }{\R{s_1}} &
	\\[-2ex]
	\cdot
	\CP{
		\R{s_1}
	}{
		s_0
	} &
	\end{aligned}
\\ 
	\nonumber
	& \hspace{1mm}
	\begin{aligned}
	=
		\min_{ \tr{z}_1^{H} \in \mathcal{C}_z }
			\sum_{\R{s_1}}
			\sum_{t=1}^{H}
			\sum_{\tau = t}^{H}	
			\sum_{\R{s_t}}			
				2q_0^\intercal  \left(
				R W^\R{s_t} z_t^\R{s_t} + a_t \right)
		\CP{ \R{s_t} }{\R{s_1}} &
	\\[-2ex]
	\cdot
	\CP{
		\R{s_1}
	}{
		s_0
	} &
	\end{aligned}
\\ 
	\nonumber
	& = 
		\min_{ \tr{z}_1^{H} \in \mathcal{C}_z }
			\sum_{t=1}^{H}
			\sum_{\tau = t}^{H}	
			\sum_{\R{s_t}}			
				2q_0^\intercal   \left(
				R W^\R{s_t} z_t^\R{s_t} + a_t \right)
		\CP{ \R{s_t} }{s_0}
\\ 
	\nonumber
	& = 
		\min_{ \tr{z}_1^{H} \in \mathcal{C}_z }
		\mathbb{E} \Bigg[
			\sum_{t=1}^{H}
			\sum_{\tau = t}^{H}	
				2q_0^\intercal   \left(
				R M_{t} z_t^{s_t} + a_t \right)
		\, \Bigg| \,
			q_0,s_0
		\Bigg]
\\ 
	\nonumber
	& = 
		\min_{ \tr{z}_1^{H} \in \mathcal{C}_z }
		\mathbb{E} \Bigg[
			\sum_{t=2}^{H+1}
				2q_0^\intercal  
				\D_1^t
		\, \Bigg| \,
			q_0,s_0
		\Bigg]
\\ 
	\nonumber
	& =
		\min_{ \tr{z}_1^{H} \in \mathcal{C}_z }
		\mathbb{E} \Bigg[
			\sum_{t=1}^{H+1}
				2q_0^\intercal  
				\D_0^t
				- 2q_0^\intercal  \D_0^1
		\, \Bigg| \,
			q_0,s_0
		\Bigg]
\\ 
    \label{eq::to_last_long}
	& \leq
		\min_{ \tr{z}_1^{H} \in \mathcal{C}_z}
		\mathbb{E} \Bigg[
			\sum_{t=1}^{H+1}
				2q_0^\intercal  
				\D_0^t
		\, \Bigg| \,
			q_0,s_0
		\Bigg]
	-
	\min_{ \tr{z}_0^{0} \in \hphantom{\mathcal{C}} \mathclap{\mathcal{C}_z} }
	\mathbb{E} \Bigg[	
	2q_0^\intercal  \D_0^1
	\, \Bigg| \,
		q_0,s_0
	\Bigg]
\end{align}
Though $\D_0^1$ is steered by $v_0$ (the actual control of the system), every policy has to abide to the constituency, which allows for the last term to be formulated over $\tr{z}_0^0 \Leftrightarrow z_0^{s_0}$. Still, the first term of \eqref{eq::to_last_long} depends on $v_0$ through $\D_0^t$ so that we can rewrite it as
\begin{multline}
\label{eq::to_elim_a}
	\min_{ \tr{z}_1^{H} \in \mathcal{C}_z }
	\mathbb{E} \Bigg[
		\sum_{t=1}^{H+1}
			2q_0^\intercal 
			\D_0^t
	\, \Bigg| \,
		q_0,s_0
	\Bigg]
	=
	\frac{ \left( H+1 \right) \left( H+2 \right) }{2} \e{a}
\\
		+
		\min_{ \tr{z}_1^{H} \in \mathcal{C}_z }
		2q_0^\intercal R \left[
		    \sum_0^H
		        W^{s_0} v_0
		    +
		    \sum_{t=1}^H \sum_{\tau=t}^{H} \sum_{\R{s_t}} W^\R{s_t} z_t^\R{s_t} \CP{\R{s_t}}{s_0}
		\right]
\end{multline}

To interface the $\PNC$ policy \eqref{eq::pnc_policy_def}, we need to incorporate the positiveness constraints. To that end, we introduce a transformation in variables, centered around the idea, that each set $\{ z_t^1,\dots z_t^{n_s} \}$ can be expressed by a common part $\mu_t$, and $n_s$ differences $\delta_t^i$:
\begin{equation}
\begin{gathered}
	z_t^i = \mu_t + \delta_t^i \in \mathcal{C}_z
	\\
	\text{for} \qquad  t = 1,\dots H \qquad \text{and} \qquad i = 1 ,\dots n_s
	\\
	\text{with}  \qquad  \mu_t,\delta_t^i \in \{0,1\}^{n_v} \qquad \text{and} \qquad C \mu_t \leq c 
\end{gathered}
\end{equation}
We define a suitable stacking of these new variables in such a way that
$\tr{z}_1^H = \tr{\mu}_1^H + \tr{\delta}_1^H$ and write $\tr{\delta}_1^H \in \mathcal{C}_z \setminus \tr{\mu}_1^H $, to express that $\tr{\delta}_1^H$ has to abide to usual constituency, if $\tr{\mu}_1^H$ has already been chosen (i.e. for each $\delta_t^i$ separately it must hold that $C \delta_t^i \leq c - C \mu_t$).

Next, we substitute these variables into the last term of \eqref{eq::to_elim_a}.
\def\hsp{\hspace{1cm}}
\def\minsp{\hspace{3mm}}
\begingroup
\allowdisplaybreaks
\begin{align}
    \nonumber
    &
	\min_{ \tr{z}_1^{H} \in \mathcal{C}_z }
		2q_0^\intercal  R 
		\Bigg[
		\sum_{0}^{H}  W^{s_0} v_0
		 +
		\sum_{t=1}^H \sum_{\tau=t}^{H} \sum_{\R{s_t}} W^\R{s_t} z_t^\R{s_t} \CP{\R{s_t}}{s_0}
		\Bigg]
\\ \nonumber
	&
	=
	\min_{ \tr{\mu}_1^H + \tr{\delta}_1^H \in \mathcal{C}_z }
		2q_0^\intercal  R 
		\Bigg[
		\sum_{0}^{H} W^{s_0} v_0
		\\ \nonumber
		& \hsp +
		\sum_{t=1}^H \sum_{\tau=t}^{H} \sum_{\R{s_t}} W^\R{s_t} \left( \mu_t + \delta_t^\R{s_t} \right) \CP{\R{s_t}}{s_0}
		\Bigg]
\\ \nonumber
	&
	=
	\min_{ \tr{\delta}_1^H \in \mathcal{C}_z \setminus \tr{\mu}_1^H }
	\minsp
	\min_{ \vphantom{\tr{\delta}_1^H} \tr{\mu}_1^H \in \mathcal{C} }
		2q_0^\intercal R 
		\Bigg[
		\sum_{0}^{H} W^{s_0} v_0
		\\ \nonumber
		& \hsp +
		\sum_{t=1}^H \sum_{\tau=t}^{H} \left( \sum_{\R{s_t}} W^\R{s_t} \CP{\R{s_t}}{s_0} \right) \mu_t 
		\\ \nonumber
		& \hsp +
		\sum_{t=1}^H \sum_{\tau=t}^{H} \sum_{\R{s_t}} W^\R{s_t} \delta_t^\R{s_t} \CP{\R{s_t}}{s_0}
		\Bigg]
\\ \nonumber
	&
	=
	\min_{ \tr{\delta}_1^H \in \mathcal{C}_z \setminus \tr{\mu}_1^H }
	\minsp
	\min_{ \vphantom{\tr{\delta}_1^H} \tr{\mu}_1^H \in \mathcal{C} }
		2q_0^\intercal R 
		\Bigg[
		\sum_{0}^{H} W^{s_0} v_0
		\\ \nonumber
		& \hsp +
		\sum_{t=1}^H \sum_{\tau=t}^{H} \e{W}_t(s_0) \mu_t
		\\ \nonumber	
		& \hsp +		
		\sum_{t=1}^H \sum_{\tau=t}^{H} \sum_{\R{s_t}} W^\R{s_t} \delta_t^\R{s_t} \CP{\R{s_t}}{s_0}
		\Bigg]
\\ \nonumber
	&
	\leq
	\min_{ \tr{\delta}_1^H \in \mathcal{C}_z \setminus \tr{\mu}_1^H }
	\minsp
	\min_{ \vphantom{\tr{\delta}_1^H}  \left( \mathbf{0}_{n_v}^\intercal , \left( \tr{\mu}_1^H\right )^\intercal \right)^\intercal \in \mathcal{C} \cap \mathcal{P}(q_0)}
		2q_0^\intercal R 
		\Bigg[
		\cdot
		\Bigg]
\\ \nonumber
	&
	\overset{\mathclap{\PNC}}{=}
	\min_{ \tr{\delta}_1^H \in \mathcal{C}_z \setminus \tr{\mu}_1^H }
	\minsp
	\min_{ \vphantom{\tr{\delta}_1^H} \left( v_0^\intercal , \left( \tr{\mu}_1^H\right )^\intercal \right)^\intercal \in \mathcal{C} \cap \mathcal{P}(q_0) }
		2q_0^\intercal R 
		\Bigg[
		\cdot
		\Bigg]
\\ \nonumber
	&
	\overset{\mathclap{\text{Lemma \ref{lemma::diff}}}}{
	\leq
	}
	K_7 +
	\min_{ \tr{\delta}_1^H \in \mathcal{C}_z \setminus \tr{\mu}_1^H }
	\minsp
	\min_{ \vphantom{\tr{\delta}_1^H} \left( v_0^\intercal , \left( \tr{\mu}_1^H\right )^\intercal \right)^\intercal \in \mathcal{C} }
		2q_0^\intercal R 
		\Bigg[
		\cdot
		\Bigg]
\\ \nonumber
	&
	\leq
	K_7 +
	\min_{ \tr{\delta}_1^H \in \mathcal{C}_z \setminus \tr{\mu}_1^H }
	\minsp
	\min_{ \vphantom{\tr{\delta}_1^H}  v_0 \in \mathcal{C} }
	\minsp
	\min_{ \vphantom{\tr{\delta}_1^H} \tr{\mu}_1^H \in \mathcal{C} }
		2q_0^\intercal R 
		\Bigg[
		\cdot
		\Bigg]
\\ \nonumber
	&
	=
	K_7 +
	\min_{ \tr{z}_1^H \in \mathcal{C}_z }
	\minsp
	\min_{ v_0 \in \mathcal{C} }
		2q_0^\intercal R 
		\Bigg[
		\sum_{0}^{H} W^{s_0} v_0
\\ \nonumber
		& \quad +		
		\sum_{t=1}^H \sum_{\tau=t}^{H} \sum_{\R{s_t}} W^\R{s_t} z_t^\R{s_t} \CP{\R{s_t}}{s_0}
		\Bigg]
\\ \nonumber
	&
	=
	K_7 + 
	\min_{ \tr{z}_0^H \in \mathcal{C}_z }
	2q_0^\intercal R 
		\sum_{t=0}^H \sum_{\tau=t}^{H} \sum_{\R{s_t}} W^\R{s_t} z_t^\R{s_t} \CP{\R{s_t}}{s_0}
\\
    \label{eq::to_intro_PNC}
	&
	=
	K_7 + 
	\min_{ \tr{z}_0^H \in \mathcal{C}_z }
		\CE{ \sum_{t=1}^{H+1} 2q_0^\intercal R \D_0^t }{q_0,s_0}
\end{align}
Note that we used the fact that $\mu_t$, once separated from the $\delta_t^i$ in a suitable manner, can be identified with the dummy control $u_t$ from the definition of the $\PNC$ policy \eqref{eq::pnc_objective}. Hence, the equality marked with the $\PNC$ label \textit{only} holds under the $\PNC$ policy, since it chooses $v_0$ in such a way that the entire first term is minimized over the trajectory $\tr{z}_0^H$ instead of only $\tr{z}_1^H$.

If we now combine the results of \eqref{eq::to_long}, \eqref{eq::to_elim_a} and \eqref{eq::to_intro_PNC} we get
\begin{gather}
	\label{eq::to_first_drift_term}
	\CE{ f (q_1,s_1) }{ q_0,s_0 } - H \norm{q_0} - K_8 
	\\
	\nonumber
	\leq
	\min_{ \tr{z}_0^{H} \in \hphantom{\mathcal{C}} \mathclap{\mathcal{C}_z} }
	\mathbb{E} \Bigg[
		\sum_{t=1}^{H+1}   2q_0^\intercal  \D_0^t
	\, \Bigg| \,
		q_0,s_0
	\Bigg]
	-
	\min_{ \tr{z}_0^{0} \in \hphantom{\mathcal{C}} \mathclap{\mathcal{C}_z} }
	\mathbb{E} \Bigg[	
	2q_0^\intercal  \D_0^1
	\, \Bigg| \,
		q_0,s_0
	\Bigg]
\end{gather}
which completes the derivation for the first term of \eqref{eq::to_drift}.

In a similar but much easier fashion, the second term of \eqref{eq::to_drift} can be reshaped into
\begin{equation}
	\label{eq::to_second_drift_term}
\begin{gathered}
	\CE{ f (q_0,s_0) }{ q_0,s_0 } - H \norm{q_0}
	\\
	\geq
	\min_{ \tr{z}_0^{H-1} \in \mathcal{C}_z }
	\mathbb{E} \Bigg[
		\sum_{t=1}^{H}   2q_0^\intercal  \D_0^t
	\, \Bigg| \,
		q_0,s_0
	\Bigg]
\end{gathered}
\end{equation}
Combining \eqref{eq::to_first_drift_term} and \eqref{eq::to_second_drift_term} results in
\begin{gather*}
	\CE{ f (q_1,s_1) - f (q_0,s_0) }{ q_0,s_0 }
\\
	\leq
	K_{8} +
	\min_{ \tr{z}_1^{H} \in \mathcal{C}_z }
	\mathbb{E} \Bigg[
		2q_0^\intercal  \D_1^{H+1}
	\, \Bigg| \,
		q_0,s_0
	\Bigg]
\end{gather*}
To alleviate the outer conditioning on $s_0$ we take the expectation $\CE{\cdot}{q_0}$ on both sides, conditioned only on $q_0$, and swap minimization and expectation operator:
\begin{gather*}
	\CE{ f (q_1,s_1) - f (q_0,s_0) }{ q_0 }
\\
	\leq
	K_{9} +
	\mathbb{E} \Bigg[
	\min_{ \tr{z}_1^{H} \in \mathcal{C}_z }
	\mathbb{E} \Bigg[
		2q_0^\intercal  \D_1^{H+1}
	\, \Bigg| \,
		q_0,s_0
	\Bigg]
	\, \Bigg| \,
		q_0
	\Bigg]
\\
	\leq
	K_{9} +
	\min_{ \tr{z}_1^{H} \in \mathcal{C}_z }
	\mathbb{E} \Bigg[
		2q_0^\intercal  \D_1^{H+1}
	\, \Bigg| \,
		q_0
	\Bigg]
\end{gather*}
Finally, recall that for throughput optimality, this expression has to be negative for each $\e{a}$ that can be expressed via \eqref{eq::def_to}. Substituting this we obtain
\begin{gather*}
	\CE{ f (q_1,s_1) - f (q_0,s_0) }{ q_0 }
\\
\begin{aligned}
	& \leq
	K_{9} +
	\min_{ \tr{z}_1^{H} \in \mathcal{C}_z }
		2q_0^\intercal \left(
		\sum_{t=1}^{H}
			\sum_{\R{s}}  \pi^\R{s} R  W^\R{s} z_t^\R{s} + H \e{a}
		\right)
\\ &
\begin{multlined}
	\leq
	K_{9} +
	\min_{ z' \in \mathcal{C}_z }
		2Hq_0^\intercal \Bigg(
		\sum_{\R{s}} \pi^\R{s} RW^\R{s} z^\R{s}
	\\
		-  \varepsilon \mathbf{1}_{n_q}
		-  \sum_{\R{s}} \pi^\R{s} R W^\R{s} \sum_{v \in \mathcal{V}} \lambda^{\R{s},v} v
		\Bigg)
\end{multlined}	
\end{aligned}	
\end{gather*}
Because the minimization is linear, the optimum is found on the boundary and thus the first term in the bracket (which is subject to minimization) will at least cancel out the last term, leaving us with
\begin{align*}
	\CE{ f (q_1,s_1) - f (q_0,s_0) }{ q_0 }
	& \leq
	K_{9}
	-
	2 H \varepsilon q_0^\intercal  \mathbf{1}_{n_q}
	\\
	& \leq
	K_{9}
	-
	K_{10} \mathbf{1}_{n_q}^\intercal q_0 
\end{align*}
which fulfills lemma \ref{lemma::foster} and therefore proves throughput optimality of our $\PNC$ policy.
\\
\rightline{$\Box$}

\section{Exemplary Applications of PNC}
\label{sec::examples}

\subsection{Dynamic Topology}

We employ a scenario as depicted in \figref{fig::ext_1_scenario}, where a mobile user equipment (UE) crosses multiple sectors, each one designated to a specific access point (AP). In each sector, the UE can only communicate with the corresponding AP. The APs are connected to a global network from which they receive packets that they are supposed to transmit to the UE.
The derived queueing network is shown in \figref{fig::ext_1_model}. We use a most simplified model to yield easily interpretable results: First, the DTMC is deterministic which allows us to fix the time behavior of the transmission success probabilities $\e{m}^j_t$ of the links. Second, we model this deterministic time behavior as binary sequences which are depicted in \figref{fig::ext_1_links}. This corresponds to the case, in which the UE travels with constant velocity along a known path and the sectors do not overlap. The UE remains in each sector for exactly 3 time slots, where it experiences perfect channel quality (guaranteed transmission success).
Further we assume that a single packet is created every second time slot at $q^1$, which represents the entire arrival to the system.
\begin{figure}[htbp]
  \centering
  \includegraphics[]{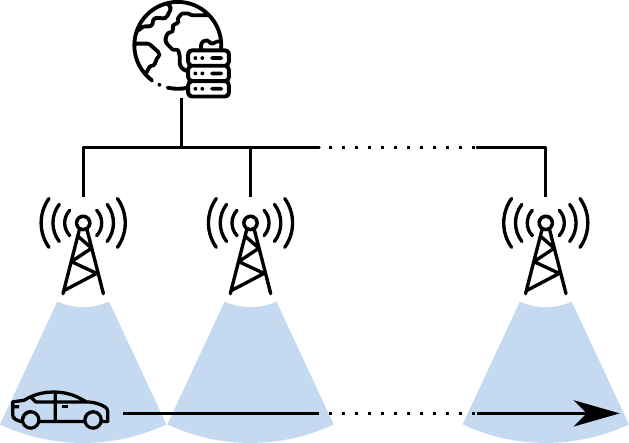}
  \caption{Extension 1 - Scenario}
  \label{fig::ext_1_scenario}
\end{figure}
\begin{figure}[htbp]
  \centering
  \includegraphics[]{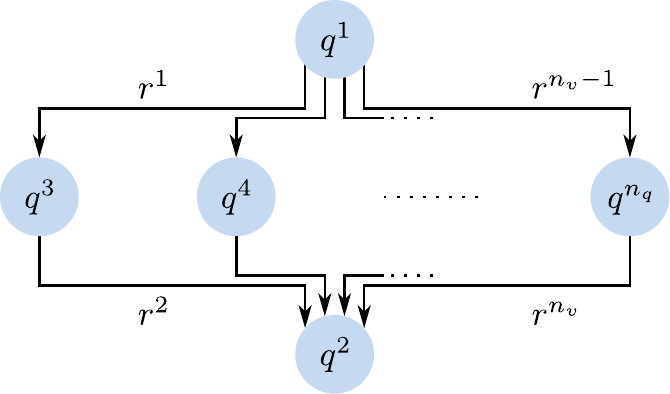}
  \caption{Example 1 - Queueing Network}
  \label{fig::ext_1_model}
\end{figure}
\begin{figure}[htbp]
  \centering
  \includegraphics[]{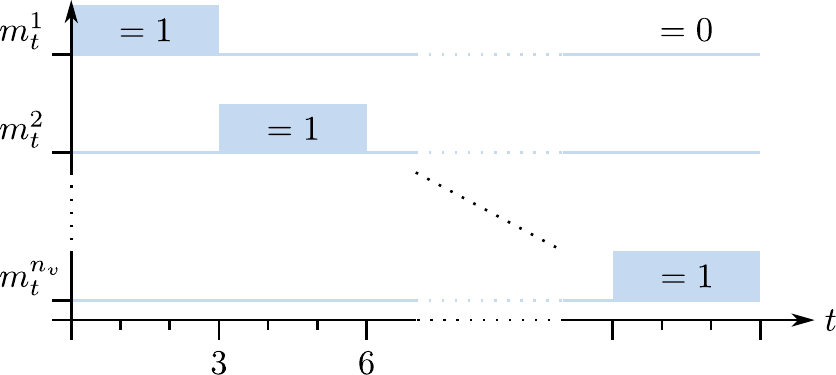}
  \caption{Example 1 - Link Probabilities}
  \label{fig::ext_1_links}
\end{figure}

The simulation results in \figref{fig::ext_1_simu_2} depict the accumulated amount of packets send (blue) and received by the UE (green and red). For visualization purposes, we averaged the resulting step functions, so that they are presented as lines. It can be observed that only around 33\% of the packets reach the UE for the conventional back pressure policy, $\MW$, (red). (Note that $\MW$ can be expressed as a special case of the $\PNC$ policy, in which the horizon is $H=1$.) The other 66\% remain at already past base stations. This high packet loss is due to $\MW$ requiring time to establish its throughput optimality. Indirectly, $\MW$ functions by using misplaced packets as an indicator for later control decisions. The presented example, however, is based on a transient event where this indicator function of misplaced packets is only of limited use.

As can be seen, using the novel $\PNC$ with horizon $H=2$, (the lowest green line) already nearly doubles the amount of packets that arrive at the UE to 60\%. For $H=5$ we reach 80\%, a significant performance boost.
\begin{figure}[htbp]
  \centering
  \includegraphics[]{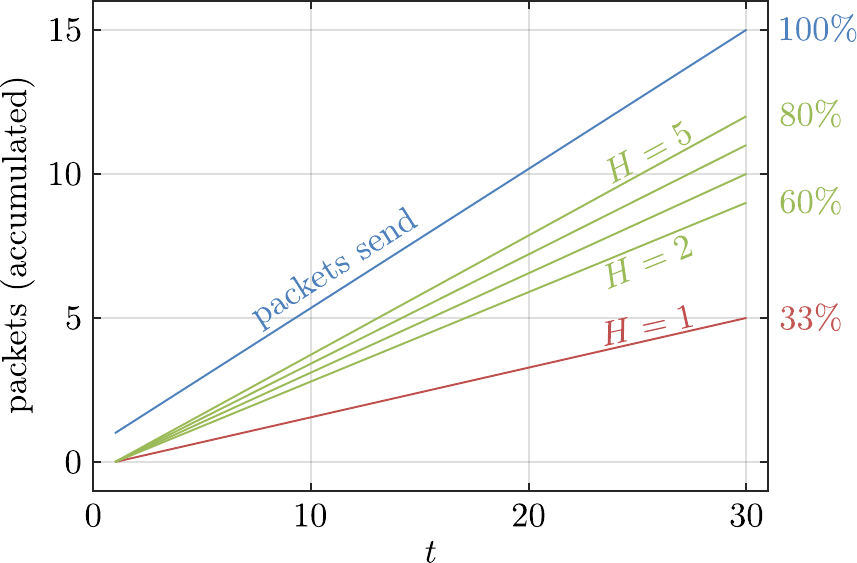}
  \caption{Example 1 - Simulation with Multiple Policies}
  \label{fig::ext_1_simu_2}
\end{figure}

\subsection{Networks with Synchronized Queues}
\label{subsec::synchronized_queues}

The following application is motivated by fact, that though $\MW$ performs poorly in networks with dynamic topology, it is still able to achieve throughput optimality in the long run, if we assume $\MW$ to be sensitive to the current state of the DTMC. And it is only fair to make this assumption, since we assume the same for our $\PNC$ policy. Hence, throughput optimality seems to be shared by both policies, if we talk about conventional networks. However, in the next example, we forgo conventional networks and introduce \textit{synchronized} queues. In networks with synchronized queues, only $\PNC$ seems to maintain its throughput optimality while $\MW$ fails, giving a strong incentive to employ the $\PNC$ policy.

Queues are \textit{synchronized} (or \textit{paired}), if they can only be served at the same time. This can be useful, if one wants to
exploit constructive interference \cite{Timotheou2016} or
model parallel processing tasks in computing \cite{Evdokimova2018} and social matchmaking \cite{Buke2015}.
While synchronized queues have been studied on their own \cite{Harrison1973} \cite{Fayolle1979} \cite{Borst2008} \cite{DeCuypere2014}, there has not been any research on how they behave in a network. Indeed, \cite{Schoeffauer2018a} presents a simple example, proving that $\MW$ loses its throughput optimality if the network contains only a single pair of synchronized queues. The reason for that can be found equation A.18 from the original proof in \cite{Tassiulas1992}, which loses its generality. In layman's terms, the original proof is based on the fact, that the evolution of the queue vector constitutes a DTMC by itself. And for conventional networks, there exists a \textit{finite} set of states (of that DTMC) which can be shown to be recurrent. This makes the entire DTMC recurrent which corresponds to throughput optimality. However, introducing synchronized queues, the finite set grows to infinite size, invalidating this correspondence. This leads to the questions, in how far back-pressure policies are suited for such networks and if there exists another policy, which guarantees throughput optimality.

To illustrate that $\PNC$ might be that policy, we refer to the example, depicted in \figref{fig::constructive_interference_set_up}. Set-up and thereof derived queueing network are shown on the left and right side, respectively. The example consists of an access point (AP) $q_1$ that can either transmit solitary (link $r^1$), or initiate synchronized transmission (link $r^3$) with a neighboring AP $q_2$. The synchronized transmission uses constructive interference and thus achieves higher throughput. However, before synchronized transmission can be initiated, the data packets have to be shared (link $r^2$), i.e. copied from $q_1$ to $q_2$.
\begin{figure}[htbp]
  \centering
  \includegraphics[]{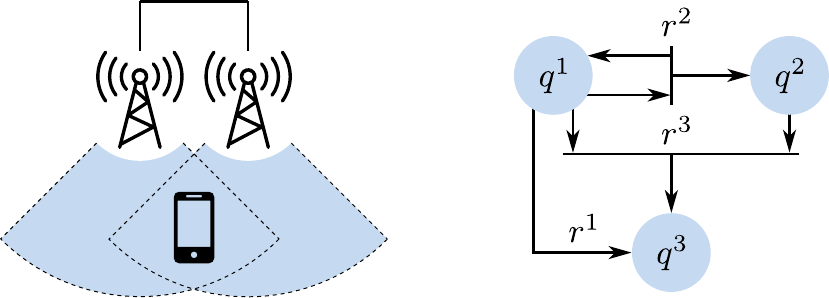}
  \caption{Example 2 - Scenario and Queueing Network}
  \label{fig::constructive_interference_set_up}
\end{figure}

For comprehensiveness, we use a constant success probability matrix $\e{M}$, i.e. we do not make use of an DTMC to select different matrices from $\mathcal{W}$. The $\e{m}^j$ (which are the diagonal elements of $\e{M}$) are chosen in such a way, that it is beneficial to copy (share) the data and then transmit together, instead of broadcasting the data directly. 
Specifically, we set $\e{m}^1 = \frac{1}{4}$ and $\e{m}^2 = \e{m}^3 = 1$ and assume all links to be disjunct. Note, that we can neglect $q^3$ in all further discussions, since it only symbolizes the destination queue.

For this simple example, it is prudent to forgo working in terms of the \textit{control vector} $v_t$ and instead use the \textit{control option} $u_t$, which we define to be $u_t = R \e{M} v_t$ (this $u_t$ is not related to the one, that was used in the definition of the $\PNC$ policy). We have $u_t \in \mathcal{U}$, which can be derived from the system description without further ado to be
\begin{equation}
    \mathcal{U} =
    \Set{
    u^0 , u^1 , u^2 , u^3
    }
    = 
    \Set{
    \begin{pmatrix} 0 \\ 0 \end{pmatrix} ,
    \begin{pmatrix} -1 \\ 0 \end{pmatrix} ,
    \begin{pmatrix} 0 \\ 4 \end{pmatrix} ,
    \begin{pmatrix} -4 \\ -4 \end{pmatrix}
    }
\end{equation}
where we scaled all elements of $\mathcal{U}$ with the factor $4$ to simplify any calculations. We have $u^1, u^2, u^3$ represent single transmission, data sharing, and joint transmission, respectively and in each time slot, the controller may only choose one of these options to influence the expected queue state via $\CE{q_{t+1}}{q_t} = q_t + u_t + a_t$.

Regarding \eqref{eq::def_to}, it is now very easy to express the set of all arrival rates $\e{a}$, for which there exists a policy that stabilizes the system. We call this set the maximum stability region $\mathcal{A}$ and have
\begin{equation}
	\begin{aligned}
	\mathcal{A} :&= \Set{
		\e{a} : \quad
		\e{a} + \sum_{u \in \mathcal{U}} \lambda^u u = -\mathbf{1} \varepsilon, \quad
		\begin{gathered}
			\varepsilon > 0
			\\
			\sum \lambda^u \leq 1
		\end{gathered}
	}
	\\
	&=
	\Set{
		\e{a} : \quad
		\e{a} + \sum_{u \in \mathcal{U}} \lambda^u u =  \mathbf{0}, \quad \hspace{4.5mm}
			\sum \lambda^u < 1
	}
	\end{aligned}
\end{equation}
Remember that throughput optimality is accomplished, if a policy can stabilize the system for all arrival rates in $\mathcal{A}$.
A graphical representation of $\mathcal{A}$ is given in green on the left side of \figref{fig::constructive_interference_stab_region}.

Now, let us assume that there is no arrival at $q^2$, i.e. $\e{a}^2 = 0$.
Using the control options $u^2$ and $u^3$ in alternating sequence (given that there are enough packets in $q^1$ to do so) would yield an efflux of $4$ packets every $2$ time slot, thus an efflux of $2$ packets per time slot. The corresponding point is shown on the right side of \figref{fig::constructive_interference_stab_region}.
It is easy to check, that no other sequence of control options can match this efflux.

However, conventional back-pressure policies like $\MW$ are not able to access the control option $u^2$, resulting in the loss of its throughput optimality in this example. The only arrival rates, that $\MW$ \textit{can} stabilize are those in the red triangle on the right side of \figref{fig::constructive_interference_stab_region}.

\begin{figure}[htbp]
  \centering
  \includegraphics[]{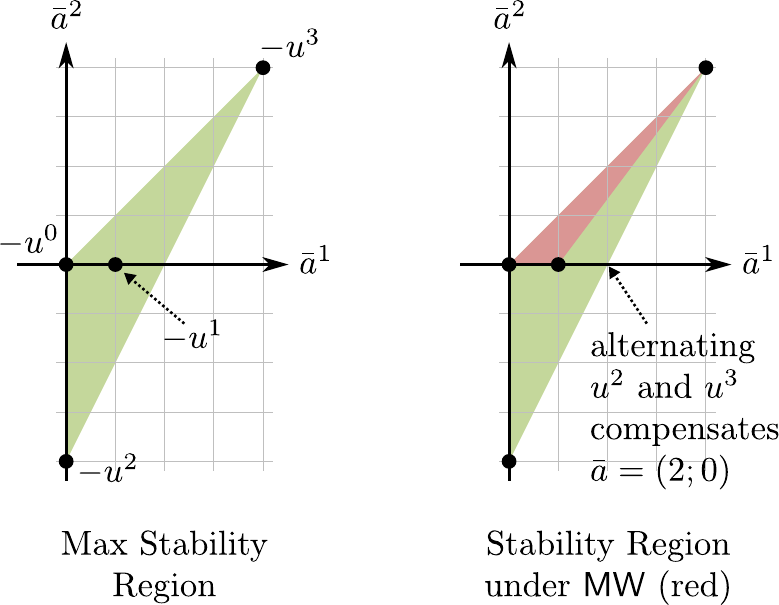}
  \vspace{0mm}
    \centering
  \caption{Example 2 - Stability Regions}
  \label{fig::constructive_interference_stab_region}
\end{figure}

In contrast, $\PNC$ is able to select the missing control option $u^2$ and simulations suggest, that it stabilizes the example for any strictly positive arrival rate $\e{a}$ from $\mathcal{A}$. To substantiate this claim we refer to \figref{fig::simu_comparison}. Here, we simulated the queue state $q^1$ over time $t$ for 3 different arrival rates $\e{a}$ under 3 different policies. The respective positions of those $\e{a}$ regarding $\mathcal{A}$ are depicted in \figref{fig::simu_points}. As for the policies, we chose $\MW$ and $\PNC$. Also, we added a third control policy, labeled $\fPNC$ for \textit{fixed} $\PNC$. This policy mimics the $\PNC$ policy, except that it uses the entire calculated control trajectory before repeating the optimization. In contrast, $\PNC$ repeats the optimization every time slot again.

As predicted, we have $\MW$ not stabilizing the blue and green arrival rates. Furthermore, it can be seen, that $\fPNC$ loses some stabilizing properties with increasing horizon as the green arrival rate cannot be stabilized with $H=3$ (this is related to the horizon not being an even number). This proves, that the MPC paradigm of repeating the optimization in every step (and thereby discarding the rest of the trajectory) is an essential part in the $\PNC$ policy.

\begin{figure}[htbp]
  \centering
  \includegraphics[]{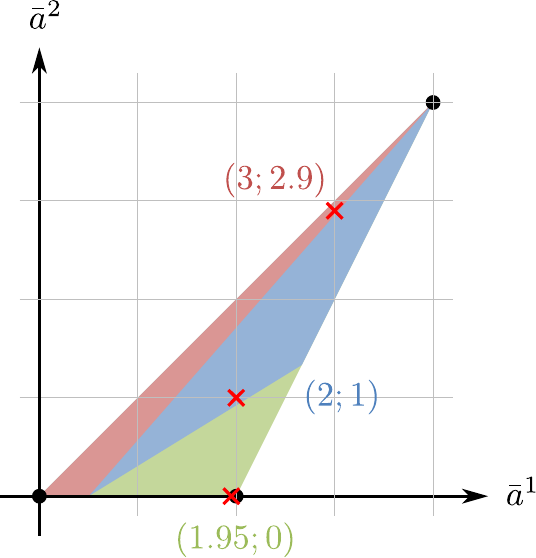}
  \caption{Example 2 - Selected arrival rates for simulation; Corresponding stability regions: (red -- $\MW$), (red+blue -- $\fPNC_{H=3}$), (red+blue+green -- $\PNC$, $\fPNC_{H=2}$)}
  \label{fig::simu_points}
\end{figure}
\begin{figure}[htbp]
  \centering
  \includegraphics[]{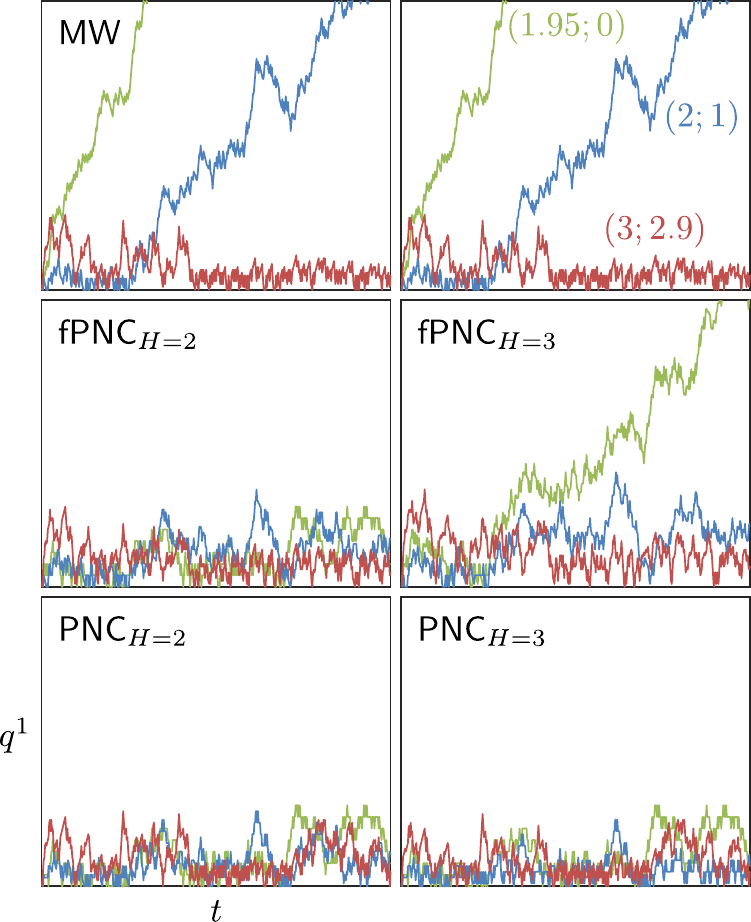}
  \caption{Example 2 - 
  Queue state (system state) $q^1$ as a function of time $t$ for various (color-coded) arrival rates}
  \label{fig::simu_comparison}
\end{figure}

\section{Conclusion}
We successfully modify a discrete-time queueing network with a JMS, i.e. with an additional DTMC that changes network parameters (even topology) on a mid- to long-term time scale. We then introduce a novel family of predictive control policies, $\PNC$, based on the paradigms of MPC, and devise a special implementation of the underlying prediction, that allows the policy to be executed in the fastest way possible. The policy is especially well suited to control the mentioned systems and outperforms conventional control approaches as is illustrated in numerical simulations. In our main contribution, we prove throughput optimality of $\PNC$. Looking ahead, we see an intriguing application in networks that consist of synchronized queues (e.g. found in parallel computing or manufacturing chains). Those networks still elude known control strategies but seem to be stabilizable under $\PNC$ policies with suitably chosen prediction horizon.

\section*{Acknowledgment}
This work is part of and thereby funded by the DFG Priority Program 1914

\bibliographystyle{ieeetr}
\bibliography{oii_2020_03_10}

\end{document}